\newcommand{\gf}{{\mathrm{GF}}}
\newcommand{\C}{{\mathcal{C}}}
\newcommand{\bD}{{\mathcal{D}}}
\newcommand{\bu}{{\mathbf{u}}}
\newcommand{\bone}{{\mathbf{1}}}
\newtheorem{theorem}{Theorem}
\newtheorem{lemma}[theorem]{Lemma}
\newtheorem{corollary}[theorem]{Corollary}
\newtheorem{problem}{Open Problem}
\newtheorem{definition}{Definition}
\newtheorem{example}{Example}
\begin{document}

\title{More  MDS codes of non-Reed-Solomon type  \thanks{
Y. Wu's research was sponsored by the National Natural Science Foundation of China under Grant  12101326 and 62372247, the
Natural Science Foundation of Jiangsu Province under Grant BK20210575, and the China Postdoctoral Science Foundation under Grant 2023M740958. Z. Heng was sponsored by the National Natural Science Foundation of China under Grant 12271059 and 11901049, and in part by the Fundamental Research Funds for the Central Universities, CHD, under Grant 300102122202. C. Li was supported by the National Natural Science Foundation of China (Nos. T2322007,
12071138), Shanghai Rising-Star Program (No. 22QA1403200), and “Digital
Silk Road” Shanghai International Joint Lab of Trustworthy Intelligent
Software (No. 22510750100). C. Ding's research was supported by The Hong Kong Research Grants Council, Proj. No. $16301523$ and in part by 
the UAEU-AUA joint research grant G00004614.
}
}

\author{Yansheng Wu,\thanks{Y. Wu is with the School of Computer Science, Nanjing University of Posts and Telecommunications, Nanjing,
210023, P. R. China. Email: 
yanshengwu@njupt.edu.cn.} 
 Ziling Heng, \thanks{Z. Heng is with the School of Science, Chang’an University, Xi’an, 710064, P. R. China. Email: zilingheng@chd.edu.cn.}
\and Chengju Li, \thanks{C. Li is with the Shanghai Key Laboratory of Trustworthy Computing, East China Normal University, Shanghai, 200062, P. R. China. Email: cjli@sei.ecnu.edu.cn.}
\and Cunsheng Ding\thanks{C. Ding is with the Department of Computer Science
                           and Engineering, The Hong Kong University of Science and Technology,
Clear Water Bay, Kowloon, Hong Kong, China. Email: cding@ust.hk}
}

%\author[Y. Wu]{ Yansheng Wu}
%\address[Y. Wu]{\rm School of Computer Science, Nanjing University of Posts and Telecommunications, Nanjing 210023, P. R. China}
%\email{yanshengwu@njupt.edu.cn}

%\author[Z. Heng]{Ziling Heng}
%\address[Z. Heng]{\rm School of Science, Chang’an University, Xi’an 710064, P. R. China}
%\email{zilingheng@chd.edu.cn}

%\author[C. Li]{Chengju Li}
%\address[C. Li]{\rm Shanghai Key Laboratory of Trustworthy Computing, East China Normal University, Shanghai, 200062, P. R. China}
%\email{cjli@sei.ecnu.edu.cn}

\date{\today}
\maketitle

%\subjclass[2010]{94B05}

%\keywords{}

%\thanks{\tiny Y. Wu was sponsored by the National Natural Science Foundation of China (Grant Nos. 12101326 and 62372247)  and the
%Natural Science Foundation of Jiangsu Province  (Grant No. BK20210575). Z. Heng was sponsored by the National Natural Science Foundation of China under Grant 12271059 and 11901049, and in part by the Fundamental Research Funds for the Central Universities, CHD, under Grant 300102122202. C. Li was supported by the National Natural Science Foundation of China (Grant No. 12071138) and the Shanghai Chenguang Program (Grant No. 18CG22). 
%}

\begin{abstract}

MDS codes have  diverse practical applications in communication systems, data storage,  and quantum codes due to their algebraic properties and optimal error-correcting capability. 
In this paper, we focus on a  class of linear codes and establish some sufficient and necessary conditions for them being MDS. Notably,  these codes differ from Reed-Solomon codes up to monomial equivalence.  Additionally, we also explore the cases in which these codes are almost MDS or near MDS.
Applying our main results, we determine the covering radii and deep holes of  the dual codes associated with specific Roth-Lempel codes and   discover an infinite family of (almost) optimally extendable codes with dimension  three.

%Secondly, we also  determine when the codes are almost MDS and Near MDS. 
%As applications of our main results, we compute the covering radii of dual of some Roth-Lempel codes and also obtained an infinite family of (almost)  optimally extendable codes with dimension three.  

\end{abstract}

\begin{IEEEkeywords}
MDS code, non-Reed-Solomon type, almost MDS code, near MDS code, covering radius, deep hole.
\end{IEEEkeywords}
%\maketitle

%%%%%%%%%%%%%%%%%%%%%%%%%%%%%%%%%%%
\section{Introduction}

Let $\gf(q)$ be the finite field of order $q$, where $q$ is a power of a prime number.
An ${[n, k,d]}$ linear code $\mathcal{C}$ over $\gf(q)$  is a $k$-dimensional
subspace of $\gf(q)^n$ with the minimum Hamming distance $d$.  
The dual code of a linear code $\C$ of length $n$ over $\gf(q)$ is defined by
$$\mathcal C^{\bot} = \left\{ (x_1, \ldots, x_{n})={\bf x}\in \gf(q)^{n} \mid \langle {\bf x},{\bf y}\rangle=\sum_{i=1}^{n}x_iy_i=0~  \forall~  {\bf y}= (y_1, \ldots,y_{n})\in \mathcal C\right\}.$$

The minimum distance $d$ is bounded  by the so-called {\em Singleton bound}, that is, $d\le n-k+1$.
If $d= n-k+1$, then the code $\mathcal{C}$ is called a {\em maximum distance separable} (MDS) code. If $d=n-k$, then the code $\mathcal{C}$ is said to be almost MDS (AMDS). Additively, $\mathcal{C}$ is said to be  near MDS (NMDS) if the code and its dual are both AMDS.

An MDS code has the optimal error correcting capability
when its length and dimension are fixed. MDS codes are extensively used in communications (for example,
 Reed-Solomon codes are MDS codes), and they have good applications in minimum storage codes and quantum codes.  There are many known constructions for MDS codes; for instance, {\em Generalized Reed-Solomon} (GRS) codes and extended GRS codes
 \cite{RS}, and  codes based on $n$-arcs in projective geometry \cite{MS}, circulant matrices \cite{RL},  and Hankel matrices \cite{RS2}.
  
 Roth and Lempel \cite{RL} introduced a novel construction for  MDS codes. Their codes $\mathcal{C}_1$ 
 over $\gf(q)$ have the following generator matrix: 
 \begin{equation}\label{eq:1.1} G_1=
\left( \begin{array}{cccccc}
1 & 1& \ldots &  1 & 0 &0\\
\alpha_1& \alpha_2& \ldots &  \alpha_{n} & 0 &0\\
\vdots& \vdots& \ldots & \vdots & \vdots &\vdots\\
\alpha_1^{k-3}& \alpha_2^{k-3}& \ldots &  \alpha_{n}^{k-3} & 0 &0\\
\alpha_1^{k-2}& \alpha_2^{k-2}& \ldots &  \alpha_{n}^{k-2} & 0 &1\\
\alpha_1^{k-1}& \alpha_2^{k-1}& \ldots &  \alpha_{n}^{k-1} & 1 &\delta\\
\end{array} \right),
\end{equation} where $\alpha_{1},\ldots,\alpha_{n}$ are distinct elements in $\gf(q)$ and $ \delta\in \gf(q)$, and $4\le k + 1 \le n \le q$.
 They  proved that the code $\C_1$ is  MDS  if and only if the set $\{\alpha_1, \ldots, \alpha_{n}\}$ is an $(n, k-1, \delta)$-set in $\gf(q)$.  Wu, Hyun, and Lee \cite{WHL} utilized the Roth-Lempel codes to construct LCD MDS codes. Their construction leveraged the properties of the Roth-Lempel codes to achieve the LCD requirement.
In a separate study, Han and Fan \cite{HF} obtained near MDS codes of non-elliptic-curve type. Their research focused on finding MDS-like codes that do not fall under the category of elliptic curve codes. By employing different techniques and considerations, they were able to construct these near MDS codes with unique properties. The works of Wu, Hyun, Lee \cite{WHL} and Han, Fan \cite{HF} demonstrated the effectiveness of the Roth-Lempel codes in constructing MDS codes with specific characteristics, such as the LCD property or the non-elliptic-curve type.

In this paper, we mainly consider the code $\mathcal C_2$ over the finite field $\gf(q)$   generated by the following matrix: 
\begin{equation}\label{eq:1.2} G_2=
\left( \begin{array}{ccccccc}
1 & 1& \ldots &  1 & 0 &0&0\\
\alpha_1& \alpha_2& \ldots &  \alpha_{n} & 0 &0&0\\
\vdots& \vdots& \ldots & \vdots & \vdots &\vdots&\vdots\\
\alpha_1^{k-4}& \alpha_2^{k-4}& \ldots &  \alpha_{n}^{k-4} & 0 &0&0\\
\alpha_1^{k-3}& \alpha_2^{k-3}& \ldots &  \alpha_{n}^{k-3} & 0 &0&1\\
\alpha_1^{k-2}& \alpha_2^{k-2}& \ldots &  \alpha_{n}^{k-2} & 0 &1&\tau\\
\alpha_1^{k-1}& \alpha_2^{k-1}& \ldots &  \alpha_{n}^{k-1} & 1 &\delta&\pi\\
\end{array} \right),
\end{equation}where $\alpha_{1},\ldots,\alpha_{n}$ are distinct elements in $\gf(q)$ and $ \delta,\tau,\pi \in \gf(q)$ and $4\le k + 1 \le n \le q$. 
Our motivation is the following open problem.  

\begin{problem}\label{pro} {\rm  When is the code $\mathcal{C}_2$ MDS,  or almost MDS,  or near MDS?}
\end{problem}

In this paper, we will provide solutions to Open Problem \ref{pro}. The rest of this  paper is organized as follows. In Section II, we will recall some basic concepts and required results. 
In Section III, we will show that the code $\mathcal C_2$ generated by $G_2$ is exactly an extended code of a  Roth-Lempel code, which helps us to prove that the code is of non-Reed-Solomon type. In Section IV, we will present a sufficient and necessary condition under which the code $\mathcal C_2$ is MDS. We also give MDS codes over small finite fields as examples. We determine  when the codes are almost MDS and near MDS in Section V. In Section VI, as applications of our main result, we will compute the covering radii of the duals of some Roth-Lempel codes and also  obtain an infinite family of (almost)  optimally extendable codes with dimension three. Finally, we conclude this paper in Section VII.
%\newpage
%%%%%%%%%%%%%%%%%%%%%%%%%%%%%%%%%%%
 \section{Preliminaries}

\subsection{Two known constructions of MDS codes}

There are many constructions of MDS codes in the literature.  In this subsection,  
we introduce two of them, which are needed later.  
We begin with the following well-known generalized Reed-Solomon codes.

 \begin{definition}{\rm \label{def1} 
 Let $\alpha_{1},\ldots,\alpha_{n}$ be distinct elements in $\gf(q)$  and $v_{1},\ldots,v_{n}$ be nonzero elements in $\gf(q)$. For $1\leq k \leq n$, the corresponding {\em generalized Reed-Solomon} (GRS) code over $\gf(q)$ is defined by
$$GR{S_k}({\boldsymbol{ \alpha}},{\bf v}): = \left\{({v_1}f({\alpha_1}), \ldots ,{v_{n }}f({\alpha_{n }})) \mid f(x) \in {\gf(q)}[x],\phantom{.} \deg(f(x)) < k \right\},$$ where $\boldsymbol{ \alpha}=(\alpha_{1},\alpha_{2},\ldots,\alpha_{n})\in \gf(q)^{n}$ and ${\bf v}=(v_{1},v_{2},\ldots,v_{n})\in {(\gf(q)^*)}^{n}$.
}
 \end{definition}

If $v_i=1$ for every $i=1, \ldots, n$, then $GR{S_k}({\boldsymbol{ \alpha}},{\bf 1})$ is called a {\it Reed-Solomon} (RS) code.
 It is well-known that a generalized Reed-Solomon code $GRS_k(\boldsymbol{\alpha},{\bf v})$ is an  $ {\left[ {n,k,n-k+1} \right] }$ MDS code.

Roth and Lempel \cite{RL} presented a new construction of MDS codes of non-Reed-Solomon type, which 
was introduced before.  Below we present a formal definition.

\begin{definition}\label{lem4} {\rm \cite{RL}
 Let $n$ and $k$ be two integers such that $4\le k+1\le n\le q$.  Let $\alpha_1, \ldots, \alpha_{n}$ be distinct elements of $\gf(q)$,  $\boldsymbol{\alpha}=(\alpha_{1},\alpha_{2},\ldots,\alpha_{n})$, and $\delta\in \gf(q)$. Then an $[n+2,k]$ {\em Roth-Lempel  code}
$RL({\boldsymbol{ \alpha}}, \delta,  k,n+2)$
over $\gf(q)$ is generated by the matrix  $G_1$ in Equation \eqref{eq:1.1}.
}
\end{definition}

The following lemma determines when a Roth-Lempel  code is MDS.

\begin{lemma}\label{lem: RL}{\rm \cite{RL} 
The Roth-Lempel  code $RL({\boldsymbol{ \alpha}}, \delta,  k,n+2)$ is  MDS  if and only if the set $\{\alpha_1, \ldots, \alpha_{n}\}$ is an $(n, k-1, \delta)$-set in $\gf(q)$, where a set $S\subseteq \gf(q)$ of size $n$ is called an $(n,k-1, \delta)$-set in $\gf(q)$ if there exists an element $\delta\in \gf(q)$ such that no $k-1$ elements of $S$ sum to $\delta$. 
}
\end{lemma}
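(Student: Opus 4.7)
The plan is to invoke the standard criterion that the code generated by $G_1$ is MDS if and only if every $k \times k$ submatrix obtained by choosing $k$ of the $n+2$ columns of $G_1$ has nonzero determinant. Since the last two columns of $G_1$ are structurally different from the first $n$ Vandermonde-style columns, I would split the analysis into cases according to whether $0$, $1$, or $2$ of the last two columns appear among the chosen $k$ columns.

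The easy cases can be dispatched quickly. If all $k$ columns come from the first $n$, the resulting matrix is a $k\times k$ Vandermonde in distinct scalars, hence invertible. If $k-1$ columns come from the first $n$ together with column $n+1$, that extra column is a standard basis vector, so Laplace expansion along it reduces the determinant to a $(k-1)\times(k-1)$ Vandermonde. If $k-2$ columns come from the first $n$ together with both of the last two columns, iterated expansion along those two sparse columns collapses the determinant to a Vandermonde of size $k-2$. None of these three cases imposes a constraint on $\delta$.

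The decisive case is when $k-1$ columns $\alpha_{i_1},\ldots,\alpha_{i_{k-1}}$ from the first $n$ are chosen together with column $n+2$. The last column of the resulting $k\times k$ matrix has exactly two nonzero entries: a $1$ in row $k-1$ and $\delta$ in row $k$. Laplace expansion along it gives
\[
\det = \delta\, M_2 - M_1,
\]
where $M_2$ is the standard $(k-1)\times(k-1)$ Vandermonde in $\alpha_{i_1},\ldots,\alpha_{i_{k-1}}$ and $M_1$ is the generalized Vandermonde with row-exponents $0,1,\ldots,k-3,k-1$. I would evaluate $M_1$ by using the fact that each $\alpha_{i_j}$ is a root of $\prod_{l}(x-\alpha_{i_l}) = x^{k-1} - e_1 x^{k-2} + e_2 x^{k-3} - \cdots$, where $e_i$ is the $i$-th elementary symmetric polynomial in $\alpha_{i_1},\ldots,\alpha_{i_{k-1}}$. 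Because the rows with exponents $0, 1, \ldots, k-3$ are already present in $M_1$, they can be used as row operations to kill every term in $\alpha_{i_j}^{k-1} = e_1 \alpha_{i_j}^{k-2} - e_2 \alpha_{i_j}^{k-3} + \cdots$ except $e_1 \alpha_{i_j}^{k-2}$; this reduces the last row of $M_1$ to $e_1$ times a row with exponent $k-2$, yielding $M_1 = e_1 M_2 = \bigl(\sum_{j=1}^{k-1}\alpha_{i_j}\bigr) M_2$. Consequently $\det = \bigl(\delta - \sum_{j}\alpha_{i_j}\bigr) M_2$, which vanishes if and only if $\sum_{j=1}^{k-1}\alpha_{i_j} = \delta$.

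Assembling the four cases, every $k\times k$ minor of $G_1$ is nonzero precisely when no $(k-1)$-element subset of $\{\alpha_1,\ldots,\alpha_n\}$ sums to $\delta$, which is exactly the $(n,k-1,\delta)$-set condition. I expect the main obstacle to be the evaluation of the generalized Vandermonde $M_1$; the Newton-style row-reduction sketched above is the cleanest route, and is equivalent to the Schur polynomial identity $s_{(1)} = e_1$.
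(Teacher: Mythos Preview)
Your argument is correct. Note, however, that the paper does not supply its own proof of this lemma: it is quoted from Roth and Lempel's original paper and stated without proof in the preliminaries. That said, the identical computation occurs inside the paper's proof of Theorem~\ref{thm}, namely Case~(3), which treats the same matrix (there called $B_1$). The technique used there differs slightly from yours: rather than Laplace-expanding along the sparse column and then row-reducing the generalized Vandermonde $M_1$, the paper replaces the column $(0,\ldots,0,1,\delta)^T$ by a formal column $(1,x,\ldots,x^{k-1})^T$, recognises the resulting determinant as $\prod_{i<j}(\alpha_{i_j}-\alpha_{i_i})\prod_l(x-\alpha_{i_l})$, and simply reads off the coefficients $A_{k-1}$ and $A_{k-2}$. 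Both routes yield $\det = M_2\bigl(\delta-\sum_j\alpha_{i_j}\bigr)$. Your row-reduction via the relation $\alpha^{k-1}=e_1\alpha^{k-2}-e_2\alpha^{k-3}+\cdots$ is perfectly valid and perhaps more transparent for this single case; the paper's polynomial-coefficient trick has the advantage of scaling effortlessly when the extra column has more nonzero entries, as in Case~(4) of the same proof.
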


\subsection{One class of special polynomials and generalized Vandermonde matrices}
Define $[l]:=\{1,2,\ldots,l\}$. The {\it multivariate homogeneous polynomial} of degree $r$ in $l$ variables $x_1, x_2,\ldots, x_{l}$, denoted by $P_{r,l}(x_1,\ldots,x_l)$, is defined recursively by
\begin{equation*}\label{homo1}
  P_{r,l}(x_1,x_2,\ldots,x_l)=x_lP_{r-1,l}
  (x_1,x_2,\ldots,x_l)+P_{r,l-1}(x_1,x_2,\ldots,x_{l-1});
\end{equation*}
\begin{equation*}\label{homo2}
  P_{r,1}(x_1)=x_1^r;~~~~P_{0,l}(x_1,x_2,\ldots,x_l)\equiv1;~~~~P_{-s,l}(x_1,x_2,\ldots,x_l)\equiv0,~s>0.
\end{equation*}
The polynomial $P_{r,l}(x_1,x_2,\ldots,x_l)$ is abbreviated as $P_{r,l}$. It is easy to verify that  \begin{equation}\label{eq:prl}
P_{r,l}=\sum\limits_{i_1+\cdots+i_l=r}\prod\limits_{j=1}^{l} x_j^{i_j},
\end{equation} where $i_1,\ldots,i_l$ are nonnegative integers.

An $m\times m$ matrix
$$V_{i_1,i_2,\ldots,i_{m-1}}(\alpha_1,\alpha_2,\ldots,\alpha_m)=\left(\begin{array}{cccc}
1&1&\cdots&1\\
\alpha_1^{i_1}&\alpha_2^{i_1}&\cdots&\alpha_m^{i_1}\\
\vdots&\vdots&\vdots&\vdots\\
\alpha_1^{^{i_{m-1}}}&\alpha_2^{^{i_{m-1}}}&\cdots&\alpha_m^{^{i_{m-1}}}
\end{array}\right)
$$
is called a {\it generalized Vandermonde matrix}, where  $\alpha_i\in \gf(q)$ are pairwise distinct. 
 Moreover, it is called a {\it Vandermonde matrix}, and written simply as $V(\alpha_1,\alpha_2,\ldots,\alpha_m)$, if 
 $i_j = j$ for  $1 \leq j \leq m-1$.

The following lemma establishes a useful connection between the Vandermonde determinant and the generalized Vandermonde determinant.%\cite[Sec. 338]{Muir},

\begin{lemma}\label{gvand} {\rm \cite[Sec. 338]{Muir}\cite{HVDV}
Let $D_{k_1,\ldots,k_m}(x_1,\ldots,x_{m+1})$ denote the determinant of the matrix $$\left(\begin{array}{cccc}
P_{k_j-1,2}&P_{k_j-2,3}&\cdots&P_{k_j-m,m+1}
\end{array}\right)_{1\leq j\leq m},$$
where $(P_{k_j-1,2}~~P_{k_j-2,3}~~\cdots~~P_{k_j-m,m+1})$ denotes the $j$-th row of this matrix.
Then the determinant of $V_{k_1,\ldots,k_m}(x_1,\ldots,x_{m+1})$ is equal to
\begin{equation*}\label{didi}
  D_{k_1,\ldots,k_m}(x_1,\ldots,x_{m+1})\det(V(x_1,\ldots,x_{m+1})).
\end{equation*} }
\end{lemma}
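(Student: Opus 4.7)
The plan is to reduce $V_{k_1,\ldots,k_m}(x_1,\ldots,x_{m+1})$ to the matrix defining $D_{k_1,\ldots,k_m}$ by $m$ rounds of column operations, peeling off one ``layer'' of the Vandermonde product in each round.

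The key algebraic identity I would first establish is
\begin{equation*}
P_{n,s}(x_1,\ldots,x_{s-1},x_l)-P_{n,s}(x_1,\ldots,x_{s-1},x_s) \;=\; (x_l-x_s)\,P_{n-1,s+1}(x_1,\ldots,x_{s-1},x_s,x_l),
\end{equation*}
valid for $n\ge 1$ and $s\ge 1$. The slickest derivation uses the symmetry of $P_{n,s+1}$ (visible from the explicit expansion \eqref{eq:prl}): apply the recursive definition to $P_{n,s+1}(x_1,\ldots,x_{s-1},x_s,x_l)$ twice, once with $x_l$ playing the role of ``last variable'' and once with $x_s$ in that role, and subtract the two expressions; the $P_{n-1,s+1}$ terms combine into $(x_l-x_s)\,P_{n-1,s+1}$ while the $P_{n,s}$ terms produce the left-hand side. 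An alternative verification is direct expansion using $x_l^{a}-x_s^{a}=(x_l-x_s)\sum_{b+c=a-1}x_l^{b}x_s^{c}$ followed by re-indexing.

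Armed with this identity, I would process $V_{k_1,\ldots,k_m}(x_1,\ldots,x_{m+1})$ in stages $s=1,2,\ldots,m$. At stage $1$, subtract column $1$ from columns $2,\ldots,m+1$: the top entries in those columns become zero, and the $(j+1,i)$-entry becomes $x_i^{k_j}-x_1^{k_j}=(x_i-x_1)P_{k_j-1,2}(x_1,x_i)$; extracting $\prod_{i=2}^{m+1}(x_i-x_1)$ and expanding along the first row yields an $m\times m$ determinant whose $(j,i)$-entry is $P_{k_j-1,2}(x_1,x_i)$. At stage $s\ge 2$, the working matrix has its ``$x_l$-column'' (for $l\ge s$) filled with entries $P_{k_j-(s-1),s}(x_1,\ldots,x_{s-1},x_l)$; subtracting the $x_s$-column from each $x_l$-column with $l>s$ and invoking the key identity replaces those entries by $(x_l-x_s)P_{k_j-s,s+1}(x_1,\ldots,x_s,x_l)$, after which extracting $\prod_{l=s+1}^{m+1}(x_l-x_s)$ finalises the $x_s$-column in the target form.

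After all $m$ stages, the accumulated scalar factor is
\begin{equation*}
\prod_{s=1}^{m}\prod_{l=s+1}^{m+1}(x_l-x_s)=\prod_{1\le s<l\le m+1}(x_l-x_s)=\det V(x_1,\ldots,x_{m+1}),
\end{equation*}
and the residual $m\times m$ matrix has $(j,t)$-entry $P_{k_j-t,t+1}(x_1,\ldots,x_{t+1})$ for $1\le t\le m$, which is exactly the matrix whose determinant defines $D_{k_1,\ldots,k_m}(x_1,\ldots,x_{m+1})$. Multiplying these together yields the claimed factorisation. The main obstacle is securing the telescoping identity for $P_{n,s}$; once it is in hand, what remains is careful bookkeeping to confirm that the indices of the homogeneous polynomials line up correctly as the iteration advances and that the extracted scalars assemble into the full Vandermonde product.
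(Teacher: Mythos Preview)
The paper does not prove this lemma at all: it is quoted as a classical result with citations to Muir's treatise and to Van de Vel, and is used as a black box later in the proof of Theorem~\ref{thm}. So there is no in-paper argument to compare against.

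Your proposed argument is correct. The telescoping identity
\[
P_{n,s}(x_1,\ldots,x_{s-1},x_l)-P_{n,s}(x_1,\ldots,x_{s-1},x_s)=(x_l-x_s)\,P_{n-1,s+1}(x_1,\ldots,x_s,x_l)
\]
follows exactly as you say from applying the recursion twice using the symmetry of $P_{n,s+1}$ visible in \eqref{eq:prl}; it also covers the degenerate cases $n\le 0$ trivially (both sides vanish), which you implicitly need when some $k_j<s$. The column-reduction scheme then works as described: stage~$1$ reduces the $(m{+}1)\times(m{+}1)$ generalized Vandermonde to an $m\times m$ matrix with entries $P_{k_j-1,2}(x_1,x_i)$ while extracting $\prod_{i>1}(x_i-x_1)$, and each subsequent stage~$s$ freezes the $x_s$-column in its final form $P_{k_j-(s-1),s}(x_1,\ldots,x_s)$ and extracts $\prod_{l>s}(x_l-x_s)$. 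After stage~$m$ the residual matrix is exactly the one defining $D_{k_1,\ldots,k_m}$, and the accumulated scalar is $\det V(x_1,\ldots,x_{m+1})$. This is in fact the classical derivation one finds in Muir; your write-up supplies what the paper omits.
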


 \subsection{A kind of extended codes of linear codes}

The Roth-Lempel code is a kind of extended code of the RS code, where the extended coordinates may or may not be linearly dependent on the coordinates in the original code.    
For a given linear code $\C$ of length $n$ over $\gf(q)$ and a nonzero vector $\bu$ in $\gf(q)^n$, Sun, Ding and Chen \cite{SDC}  recently defined another kind of extended linear code $\overline{\C}(\bu)$ of $\C$,  developed some general theory of the extended codes   $\overline{\C}(\bu)$   and 
studied  the extended codes  $\overline{\C}(\bu)$ of several families of linear codes, including cyclic codes, projective two-weight codes, nonbinary Hamming codes.

Let ${\bf u}=(u_1, u_2, \ldots, u_n)\in \gf(q)^n$ be any nonzero vector. In \cite{SDC},  any given $[n,k,d]$ code $\mathcal C$ over $\gf(q)$ can be extended into an $[n+1,k, \overline{d}]$ code $\overline{\mathcal C}({\bf u})$ over $\gf(q)$ as follows:
\begin{equation}\label{eq3}
\overline{\mathcal C}({\bf u})=\bigg\{  (c_1, \ldots, c_n,c_{n+1}): (c_1, \ldots, c_n)\in \mathcal C,c_{n+1}=\sum_{i=1}^nu_ic_i\bigg\},
\end{equation}
 and $\overline{d} = d$ or $\overline{d} = d +1$.  By definition,  $\overline{\C}(-\bone)$ is the classical extended code, where $\bone$ denotes the all-one vector of length $n$.  By  definition, the extended coordinate in 
 $\overline{\C}(\bu)$ must be linearly dependent on the $n$ coordinates in $\C$, where the dependency is defined by the vector $\bu$.  This is the difference between the two kinds of extended codes in the literature.  
 
% Define $\prod(\mathcal C)=\gf(q)^n \backslash \mathcal C^{\bot}$. 

 It is easy to check that the following lemma holds.
 
 \begin{lemma}\label{lem1}{\rm  \cite{SDC} 
 Let $\mathcal C$ be an $[n,k,d]$ linear code over $\gf(q)$ and $\bu \in \gf(q)^n$. If $\mathcal C $ has generator matrix $G$ and parity check matrix $H$, then the generator and parity check matrices for the extended code $\overline{\mathcal C}({\bf u})$ in \eqref{eq3} are $(G~ G{\bf u}^T)$ and \begin{equation*}\left(\begin{array}{cccc} H & {\bf 0}^T\\ {\bf u} & -1
\end{array}\right), \end{equation*}
where ${\bf u}^T$ denotes the transpose of ${\bf u}$.
}
\end{lemma}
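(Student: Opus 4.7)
The plan is to verify both claims directly from the definition of $\overline{\mathcal C}(\bu)$ in \eqref{eq3}, which appends to each codeword $\bc\in\mathcal C$ the single new symbol $\bc\bu^T=\sum_{i=1}^n u_i c_i$.

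For the generator matrix, I would let $\bg_1,\ldots,\bg_k$ denote the rows of $G$, so that they form a basis of $\mathcal C$. By the definition of $\overline{\mathcal C}(\bu)$, each $\bg_j$ lifts to $(\bg_j,\,\bg_j\bu^T)\in\overline{\mathcal C}(\bu)$, and these lifted vectors are precisely the rows of the block matrix $(G\mid G\bu^T)$. Since the extension map $\bc\mapsto(\bc,\bc\bu^T)$ is $\gf(q)$-linear and bijective between $\mathcal C$ and $\overline{\mathcal C}(\bu)$, the row space of $(G\mid G\bu^T)$ equals $\overline{\mathcal C}(\bu)$, and this matrix retains the full row rank $k$ of $G$; hence it is a generator matrix for $\overline{\mathcal C}(\bu)$.

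For the parity check matrix, I would characterize membership by a system: a vector $(\bv,v_{n+1})\in\gf(q)^{n+1}$ lies in $\overline{\mathcal C}(\bu)$ iff $\bv\in\mathcal C$ and $v_{n+1}=\bu\bv^T$, equivalently $H\bv^T=\bzero$ and $\bu\bv^T-v_{n+1}=0$. Packaging both conditions into a single matrix equation gives
\[
\begin{pmatrix} H & \bzero^T \\ \bu & -1 \end{pmatrix}\begin{pmatrix} \bv^T \\ v_{n+1}\end{pmatrix}=\bzero,
\]
so the kernel of the displayed block matrix is exactly $\overline{\mathcal C}(\bu)$. To upgrade this to a genuine parity check matrix I would confirm that its rows are linearly independent: the rows of $H$ have $0$ in the last coordinate while the bottom row carries $-1$ there, so the bottom row lies outside the row span of the top block; combined with the full row rank $n-k$ of $H$, this yields total rank $n+1-k$, matching the codimension of $\overline{\mathcal C}(\bu)$ in $\gf(q)^{n+1}$.

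The whole result is essentially bookkeeping with block matrices, so no step looks like a genuine obstacle; the only detail that must not be skipped is the final linear-independence check, which is what distinguishes an honest parity check matrix from one that merely annihilates the extended code.
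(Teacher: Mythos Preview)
Your argument is correct. The paper does not actually supply a proof of this lemma: it merely states ``It is easy to check that the following lemma holds'' and cites \cite{SDC}. Your direct verification---lifting the rows of $G$ via the linear map $\bc\mapsto(\bc,\bc\bu^T)$ for the generator matrix, and rewriting the membership condition $H\bv^T=\bzero$, $\bu\bv^T-v_{n+1}=0$ as a single block system for the parity check matrix---is exactly the routine check the paper alludes to, and your closing remark about confirming the row rank of the block matrix is the one small point that deserves mention.
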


The \emph{covering radius}  of a code $\mathcal C$, denoted by $\rho(\mathcal C)$, is the maximum distance from any vector in $\gf(q)^n$ to the nearest codeword in $\mathcal C$. A \emph{deep hole} is a vector achieving the covering radius.

 \begin{lemma} \label {thm6}{\rm  \cite[Theorem 6]{WDC} 
 Let $\mathcal C$ be an $[n,k]$ MDS code over $\gf(q)$.  Then for any  ${\bf u}\in \gf(q)^n$, the extended code  $\overline{\mathcal C}({\bf u})$ in \eqref{eq3} is MDS   if and only if $\rho(\mathcal C^{\bot})=k$ and  ${\bf u}$ is a deep hole of the dual code $\mathcal C^{\bot}$.
 }
\end{lemma}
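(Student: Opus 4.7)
The plan is to read off the MDS condition for $\overline{\C}(\bu)$ directly from the generator matrix given by Lemma \ref{lem1}, and then recognize the resulting linear-algebra condition as a syndrome-decoding statement about $\C^\bot$ saying that $\bu$ lies at the maximum possible distance $k$ from $\C^\bot$.

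First, I would fix a generator matrix $G$ of $\C$. By Lemma \ref{lem1}, $\overline{G} = (G \mid G\bu^T)$ generates the $[n+1,k]$ code $\overline{\C}(\bu)$, which is MDS if and only if every $k$ of its coordinate columns are linearly independent. Because $\C$ is MDS, any $k$ columns drawn entirely from $G$ are already independent, so the only nontrivial case is when the extra column $G\bu^T$ is included together with $k-1$ columns $\{G_i : i \in T\}$ of $G$ indexed by some $T \subseteq [n]$ with $|T|=k-1$. Thus $\overline{\C}(\bu)$ is MDS if and only if $G\bu^T \notin \mathrm{span}\{G_i : i \in T\}$ for every such $T$.

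Next, I would reinterpret this condition in $\C^\bot$. Since $G$ is a parity check matrix of $\C^\bot$, a standard syndrome-decoding computation yields
\[
d(\bu, \C^\bot) \;=\; \min\bigl\{|T| \,:\, G\bu^T \in \mathrm{span}\{G_i : i \in T\}\bigr\},
\]
obtained by matching each coset representative of $\bu + \C^\bot$ with the support of its entries. Hence the MDS condition of the previous paragraph is exactly $d(\bu, \C^\bot) \ge k$. On the other hand, the MDS property of $\C$ says that any $k$ columns of $G$ form a basis of $\gf(q)^k$, so every syndrome is representable using some $k$ columns, giving the general upper bound $d(\bu, \C^\bot) \le \rho(\C^\bot) \le k$ for every $\bu$.

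Combining the two bounds, $\overline{\C}(\bu)$ is MDS if and only if $d(\bu, \C^\bot) = k$, which is equivalent to saying both that $\rho(\C^\bot) = k$ and that $\bu$ attains the maximum, i.e.\ $\bu$ is a deep hole of $\C^\bot$. The one place that most deserves care is the syndrome identity displayed above: although it is a routine piece of coding folklore, the argument needs the easy observation that a minimal support set $T$ realizing $G\bu^T$ as a combination of the columns $\{G_i:i\in T\}$ gives a coset leader of the same weight, and vice versa.
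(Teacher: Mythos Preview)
The paper does not prove this lemma; it simply cites it as \cite[Theorem~6]{WDC}, so there is no in-paper proof to compare against.

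Your argument is correct and self-contained. Reading off the MDS condition from the generator matrix $(G\mid G\bu^T)$ of Lemma~\ref{lem1}, using that the MDS property of $\C$ makes any $k$ (hence any $k-1$) columns of $G$ independent, and then identifying the remaining condition ``$G\bu^T\notin\mathrm{span}\{G_i:i\in T\}$ for all $|T|=k-1$'' with $d(\bu,\C^\bot)\ge k$ via the syndrome/coset-leader description is exactly the right route. The upper bound $\rho(\C^\bot)\le k$ follows from the same fact that any $k$ columns of $G$ span $\gf(q)^k$. The syndrome identity you single out is indeed the only step requiring a short justification, and the one you give (a minimal $T$ realizing $G\bu^T$ yields a coset leader of weight $|T|$, and conversely) is sufficient; you might also note explicitly that if some $T'$ with $|T'|<k-1$ already spanned $G\bu^T$, it could be enlarged to a $(k-1)$-set, so the ``for all $|T|=k-1$'' condition is genuinely equivalent to $d(\bu,\C^\bot)\ge k$.
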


%\subsection{Equivalence of linear codes}

Recall that a \emph{monomial matrix} is a square matrix which has exactly one nonzero entry in each row and each column \cite[Sections 1.6 and 1.7]{HP}. %A monomial matrix $N$ can be written either in the form of $DP$ or the form of $PD'$, where $D$ and $D'$ are diagonal matrices and $P$ is a permutation matrix,  

\begin{definition}{\rm 
Let $ C_1$ and $ C_2$ be two linear codes  of the same length over $\gf(q)$,  and let $M_1$ be a generator matrix of $ C_1$. Then
 $ C_1$ and $ C_2$ are {\em monomially equivalent} if
 there is a monomial matrix $N$ such that $M_1N$ is a generator matrix of $ C_2$.
}
 \end{definition}
 
 GRS codes are  monomially equivalent to RS codes. The following lemma is easy to check and plays an important role in our construction of MDS codes of non-Reed-Solomon type.
 
 \begin{lemma}\label{lem:eqv}{\rm
  If the extended code $\overline{\C}(\bu)$ is monomially equivalent to an RS code, then  the original code $\C$ is also monomially equivalent to an RS code.   
  }
 \end{lemma}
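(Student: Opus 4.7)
The idea is to transport the monomial equivalence from $\overline{\C}(\bu)$ to $\C$ by puncturing the single appended coordinate. Suppose $\overline{\C}(\bu)$ is monomially equivalent to an $[n+1,k]$ RS code $R$ via a monomial matrix $N = PD$, where $P$ is the permutation matrix of some $\pi \in S_{n+1}$ and $D = \mathrm{diag}(\lambda_1,\ldots,\lambda_{n+1})$ with each $\lambda_i \in \gf(q)^*$. The first step is to locate the coordinate of $R$ that corresponds to the appended coordinate of $\overline{\C}(\bu)$: set $i_0 = \pi^{-1}(n+1)$.

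Next I would puncture both codes at this paired coordinate. On the side of $\overline{\C}(\bu)$, deleting the $(n+1)$-th position returns $\C$ by the definition \eqref{eq3}. On the side of $R$, deleting position $i_0$ produces the RS code $R'$ on the remaining $n$ evaluation points; this is immediate from the polynomial-evaluation description of RS codes, and dimension $k$ is preserved because $R$ is MDS with minimum distance at least $2$. The permutation $\pi$ restricted to $[n+1]\setminus\{i_0\}$, together with the scalars $\{\lambda_j\}_{j \neq i_0}$, then assembles into an $n \times n$ monomial matrix realizing a monomial equivalence between $\C$ and $R'$, finishing the argument.

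The only routine point to verify is that the relevant dimensions agree: since the last coordinate of $\overline{\C}(\bu)$ is by construction a linear function of the first $n$ (with coefficient vector $\bu$), the projection onto the first $n$ coordinates is injective on $\overline{\C}(\bu)$, and hence $\dim \C = \dim \overline{\C}(\bu) = k = \dim R'$. I do not anticipate any real obstacle; the content of the lemma is essentially that the class of codes monomially equivalent to RS codes is closed under puncturing, combined with the elementary fact that $\overline{\C}(\bu)$ restricted to its first $n$ coordinates is precisely $\C$.
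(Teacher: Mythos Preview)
Your argument is correct: monomial equivalence respects puncturing, puncturing $\overline{\C}(\bu)$ at its last coordinate recovers $\C$, and puncturing an RS code at any coordinate yields another RS code, so the conclusion follows. The paper does not supply a proof at all (it merely says the lemma ``is easy to check''), and your puncturing argument is exactly the routine verification the authors are gesturing at.
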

 
Moreover, in the whole paper, if a code is not monomially equivalent to an RS code, then we call it a code of {\em non-Reed-Solomon type}.

%%%%%%%%%%%%%%%%%%%%%%%%%%%%%%%%%%%
\section{Extended codes of Roth-Lempel codes}

In this section, we will show that the code $\mathcal C_2$ generated by the matrix in \eqref{eq:1.2} is   an extended code of a Roth-Lempel code.

\begin{theorem}\label{ext}{\rm  
The code  $\mathcal C_2=\overline{RL({\bf a},\delta, k,n+2)}({\bf u})$, where ${\bf u}=(u_1,  \ldots, u_{n+2})\in \gf(q)^{n+2}$ and for $1\le i\le n,$
$$u_i=   \frac{\alpha_i^{n+2-k}}{\prod_{1\leq j\leq n,j\neq i}(\alpha_i-\alpha_j)},$$  
 $$u_{n+1}=\pi-(\tau-\sum_{i=1}^n \alpha_i)\delta+\sum_{1\le i<j\le n}\alpha_i\alpha_j-(\sum_{i=1}^n \alpha_i)^2 \mbox{ and }u_{n+2}=\tau -\sum_{i=1}^n \alpha_i.$$
}
\end{theorem}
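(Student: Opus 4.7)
The plan is to recognize that $\mathcal{C}_2$ is the one-coordinate extension of $RL(\boldsymbol{\alpha},\delta,k,n+2)$ and then use Lemma~\ref{lem1} to pin down the vector $\mathbf{u}$ that realizes this extension. The first observation is that the leftmost $n+2$ columns of $G_2$ in \eqref{eq:1.2} coincide exactly with $G_1$ in \eqref{eq:1.1}: the monomial block $(\alpha_i^{j-1})$ is identical, and the two anchor columns $(0,\ldots,0,0,1)^T$ and $(0,\ldots,0,1,\delta)^T$ match position for position. Hence $\mathcal{C}_2 = \overline{RL(\boldsymbol{\alpha},\delta,k,n+2)}(\mathbf{u})$ precisely when $G_1\mathbf{u}^T$ equals the last column of $G_2$, namely $\mathbf{w}:=(0,\ldots,0,1,\tau,\pi)^T$ with nonzero entries only in rows $k-2,k-1,k$.

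Writing $G_1\mathbf{u}^T=\mathbf{w}$ row by row yields a linear system that splits into two blocks. The top $k-2$ rows depend only on $u_1,\ldots,u_n$ and read $\sum_{i=1}^n \alpha_i^{j-1} u_i = 0$ for $j=1,\ldots,k-3$ together with $\sum_{i=1}^n \alpha_i^{k-3} u_i = 1$. The bottom two rows read $\sum_i \alpha_i^{k-2}u_i + u_{n+2} = \tau$ and $\sum_i \alpha_i^{k-1}u_i + u_{n+1} + \delta u_{n+2} = \pi$. I would dispatch the top block using the classical Lagrange-type identity
\[
\sum_{i=1}^n \frac{\alpha_i^r}{\prod_{j\neq i}(\alpha_i-\alpha_j)} \;=\; P_{r-n+1,n}(\alpha_1,\ldots,\alpha_n),
\]
where $P_{r,l}$ is the complete homogeneous symmetric polynomial recalled in Section~II (so $P_{s,n}=0$ for $s<0$, $P_{0,n}=1$, $P_{1,n}=\sum_i\alpha_i$, $P_{2,n}=\sum_i\alpha_i^2+\sum_{i<j}\alpha_i\alpha_j$). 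Taking $u_i=\alpha_i^{n+2-k}/\prod_{j\neq i}(\alpha_i-\alpha_j)$ turns the $j$-th equation into $P_{j-k+2,n}$, which gives exactly $0$ for $j=1,\ldots,k-3$ and $1$ for $j=k-2$, as required.

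Substituting these $u_i$ into the bottom two equations collapses the left-hand sums to $P_{1,n}$ and $P_{2,n}$ respectively. The $(k-1)$-th equation then immediately yields $u_{n+2} = \tau - \sum_i \alpha_i$, and the $k$-th equation yields $u_{n+1} = \pi - \delta u_{n+2} - P_{2,n}$. A brief rewrite using $(\sum_i\alpha_i)^2=\sum_i\alpha_i^2+2\sum_{i<j}\alpha_i\alpha_j$ converts $-P_{2,n}$ into $\sum_{i<j}\alpha_i\alpha_j-(\sum_i\alpha_i)^2$, producing precisely the stated formula for $u_{n+1}$. The only nontrivial ingredient is the Lagrange-type identity above, which is standard (it is the coefficient of $x^{n-1}$ in the degree-$<n$ Lagrange interpolant of $x^r$ at the nodes $\alpha_1,\ldots,\alpha_n$); everything else is bookkeeping, so I anticipate no substantive obstacle.
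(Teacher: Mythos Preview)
Your proposal is correct and follows essentially the same strategy as the paper: reduce via Lemma~\ref{lem1} to verifying $G_1\mathbf{u}^T=(0,\ldots,0,1,\tau,\pi)^T$, handle the first $k-2$ rows using the given $u_1,\ldots,u_n$, and then read off $u_{n+2}$ and $u_{n+1}$ from the last two rows. The only technical difference is in evaluating the power sums $\sum_i w_i\alpha_i^{n}$ and $\sum_i w_i\alpha_i^{n+1}$: the paper reduces these via the relation $m(\alpha_i)=0$ for $m(x)=\prod_j(x-\alpha_j)$ and its coefficients, whereas you invoke the Lagrange-interpolation identity $\sum_i w_i\alpha_i^{r}=P_{r-n+1,n}$ to treat all rows uniformly; the two computations are equivalent and yield the same values.
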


\begin{proof}  By Lemma \ref{lem1}, it suffices  to check that $G_{1}{\bf u}^T=(0, 0\ldots, 0,1, \tau,\pi)^T$. 

According to the Cramer Rule, the system of the equations
\begin{equation}\label{eq:3.2}
\left(\begin{array}{cccc} 1 &1&\ldots  &1\\ \alpha_1 &\alpha_2 &\ldots & \alpha_{n}\\ \vdots &\vdots &\ddots&\vdots \\ \alpha_1^{n-1} &\alpha_2^{n-1} &\ldots &\alpha^{n-1}_{n}\end{array}\right)\left(\begin{array}{c}x_1\\x_2\\ \vdots
\\ x_n\end{array}\right)=\left(\begin{array}{c}0\\ \vdots
\\ 0\\1 \end{array}\right)
\end{equation}
has a nonzero solution $(w_1,\ldots, w_{n})^T$, where 
\begin{equation}\label{eq:3.3}
{w_i} = \frac{1}{\prod_{1\leq j\leq n,j\neq i}(\alpha_i-\alpha_j)}
\end{equation} with $i=1,2,\ldots,n$.  From Equation \eqref{eq:3.2}, we only need to check that  $$(\alpha_1^{k-2}, \ldots, \alpha_n^{k-2},0,1){\bf u}^T=\sum_{i=1}^nw_i\alpha_i^n+u_{n+2}=\tau$$ and $$(\alpha_1^{k-1}, \ldots, \alpha_n^{k-1},1,\delta){\bf u}^T=\sum_{i=1}^nw_i\alpha_i^{n+1}+u_{n+1}+\delta u_{n+2}=\pi.$$

Let $m(x)=\prod_{j=1}^n(x-\alpha_j)=x^n+\sum_{j=0}^{n-1}a_jx^j$.  Note that $m(\alpha_i)=0$ and hence $\alpha_i^n=-\sum_{j=0}^{n-1}a_j\alpha_i^j$ for $1\le i\le n$.  By Equation \eqref{eq:3.2}, we have $$\sum_{i=1}^nw_i\alpha_i^n=-\sum_{j=0}^{n-1}a_j(\sum_{i=1}^{n}w_i\alpha_i^j)=-a_{n-1}=\sum_{i=1}^n\alpha_i$$ and $$\sum_{i=1}^nw_i\alpha_i^{n+1}=-\sum_{j=0}^{n-1}a_j(\sum_{i=1}^{n}w_i\alpha_i^{j+1})=-(a_{n-2}-a_{n-1}^2)=-\sum_{1\le i<j\le n}\alpha_i\alpha_j+(\sum_{i=1}^n \alpha_i)^2.$$
This completes the proof.
\end{proof}

% In \cite{WHL}, the authors proved that Roth-Lempel codes are not monomially equivalent to Reed-Solomon codes.  
%From the point in Theorem \ref{ext} and Lemma \ref{lem:eqv}, the code $\mathcal C_2$ generated by the matrix in \eqref{eq:1.2} of course is not monomially equivalent to a Reed-Solomon code. Hence  $\mathcal C_2$ 

In their work \cite{WHL}, the authors established a proof demonstrating that Roth-Lempel codes are not monomially equivalent to RS codes. Building upon this observation, it follows from Theorem \ref{ext} and Lemma \ref{lem:eqv} that the code $\mathcal{C}_2$, generated by the matrix  in Equation \eqref{eq:1.2}, is not monomially equivalent to an  RS code. Consequently, $\mathcal{C}_2$  is a code of  non-Reed-Solomon type.

%%%%%%%%%%%%%%%%%%%%%%%%%%%%%%%%%%%
\section{When is the code $\mathcal C_2$ MDS?}

In this section, our objective is to determine some necessary and sufficient conditions under which the code $\mathcal{C}_2$ is  MDS, which  provides a solution to Open Problem \ref{pro}. 
By thoroughly investigating these conditions, our aim is to gain a comprehensive understanding of the MDS property of $\mathcal{C}_2$.

\begin{theorem}\label{thm}{\rm  
The code $\mathcal C_2$ generated by the matrix in \eqref{eq:1.2} is an MDS code of non-Reed-Solomon type if and only if the following conditions hold:
\begin{enumerate}
\item  The set $\{\alpha_1, \ldots, \alpha_n\}$ is an $(n,k-1,\delta)$-set in $\gf(q)$.

\item  The set $\{\alpha_1, \ldots, \alpha_n\}$ is an $(n,k-2,\tau)$-set in $\gf(q)$.

\item  For any subset
$I$ with size $k-1$ of $\{1, \ldots, n\}$, $$\sum_{i\neq j\in I}\alpha_i\alpha_j+\pi\neq \tau (\sum_{i\in I} \alpha_i).$$

\item For any subset $I$ with size $k-2$ of $\{1, \ldots, n\}$, $$\pi+\delta(\sum_{i\in I}\alpha_i)\neq \tau\delta+\sum_{i\in I}\alpha_i^2+\sum_{ i\neq j\in I}\alpha_i\alpha_j.$$
\end{enumerate}
}
\end{theorem}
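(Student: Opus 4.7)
Since $\mathcal{C}_2$ has length $n+3$ and dimension $k$, it is MDS if and only if every $k$ columns of $G_2$ in \eqref{eq:1.2} are linearly independent. Identifying a codeword with the polynomial $f(x) = \sum_{i=0}^{k-1} c_i x^i \in \gf(q)[x]$ producing it, this is equivalent to saying that for every $k$-subset $S$ of coordinate positions, the only polynomial $f$ of degree less than $k$ making
\[
\mathbf{c}(f) = \bigl(f(\alpha_1), \ldots, f(\alpha_n),\, c_{k-1},\, c_{k-2} + \delta c_{k-1},\, c_{k-3} + \tau c_{k-2} + \pi c_{k-1}\bigr)
\]
vanish on $S$ is $f \equiv 0$. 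The plan is to split $S = S_0 \sqcup S_1$ with $S_0 \subseteq \{1, \ldots, n\}$ (the Vandermonde positions) and $S_1 \subseteq \{n+1, n+2, n+3\}$ (the extra positions) and run a case analysis on $S_1$. There are eight sub-cases in total, and we will see that four are automatically non-singular while the remaining four correspond exactly to the four conditions in the statement.

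Set $P(x) := \prod_{i \in S_0}(x - \alpha_i)$. The $S_0$-vanishing conditions force $f = g(x)P(x)$ with $\deg g \leq |S_1| - 1$, so $g$ has exactly $|S_1|$ free coefficients. The remaining $|S_1|$ conditions from $S_1$ are linear in these coefficients, yielding a square linear system whose non-singularity I must check. The automatic sub-cases are: $|S_1| = 0$ (no room for a nonzero $g$); $S_1 = \{n+1\}$ (the condition $c_{k-1} = 0$ kills the constant $g$); $S_1 = \{n+1, n+2\}$ (the two conditions kill both coefficients of $g = a + bx$ directly); and $|S_1| = 3$ (the three conditions successively kill $c_{k-1}, c_{k-2}, c_{k-3}$, which are exactly the three coefficients of $g$).

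For the four non-automatic sub-cases, write $e_j := e_j(\alpha_{S_0})$ for the $j$-th elementary symmetric polynomial and use the expansion $P(x) = x^{|S_0|} - e_1 x^{|S_0|-1} + e_2 x^{|S_0|-2} - \cdots$. Substituting $f = gP$ into the relevant $S_1$-conditions yields: (i) for $S_1 = \{n+2\}$ with $|S_0| = k-1$, the condition $a(\delta - e_1) = 0$, giving condition~(1); (ii) for $S_1 = \{n+3\}$ with $|S_0| = k-1$, the condition $a(e_2 - \tau e_1 + \pi) = 0$, giving condition~(3); (iii) for $S_1 = \{n+1, n+3\}$ with $|S_0| = k-2$, the first condition kills the linear coefficient $b$ of $g = a+bx$ and the second then reduces to $a(\tau - e_1) = 0$, giving condition~(2); (iv) for $S_1 = \{n+2, n+3\}$ with $|S_0| = k-2$, a $2 \times 2$ linear system in $(a,b)$ emerges whose determinant works out to $e_2 - e_1^2 + \delta e_1 + \pi - \delta\tau$, and via $e_1^2 = \sum_{i \in S_0}\alpha_i^2 + 2e_2$ its non-vanishing is exactly condition~(4).

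The main obstacle is sub-case (iv), which requires handling a genuine $2 \times 2$ linear system and rewriting its determinant into the symmetric-polynomial form appearing in condition~(4); the other three non-trivial sub-cases each reduce to a single scalar equation after one step. The non-Reed-Solomon type assertion is then immediate from the discussion following Theorem~\ref{ext}: under condition~(1), Lemma~\ref{lem: RL} gives that the underlying Roth-Lempel code is MDS, hence not monomially equivalent to any RS code by \cite{WHL}, and Lemma~\ref{lem:eqv} transfers this non-equivalence to $\mathcal{C}_2$.
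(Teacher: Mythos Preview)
Your proof is correct and shares the same eight-case structure as the paper's, but the technical engine differs. The paper computes each of the four non-trivial determinants directly: cases (i)--(iii) are handled by expanding the Vandermonde determinant $\prod(x-\alpha_i)$ and reading off coefficients, while case (iv) requires cofactor expansion followed by an appeal to Lemma~\ref{gvand} (the generalized Vandermonde determinant formula in terms of the homogeneous polynomials $P_{r,l}$). You instead use the factorization $f=gP$ to replace each $k\times k$ determinant by an $|S_1|\times|S_1|$ linear system in the coefficients of $g$, so case (iv) becomes a $2\times2$ determinant computed by hand and simplified via $e_1^2=\sum\alpha_i^2+2e_2$. Your route is more elementary---it dispenses with Lemma~\ref{gvand} entirely---while the paper's determinant calculations are more self-contained in the sense that they never invoke the codeword/polynomial correspondence.
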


\begin{proof} The code $\mathcal C_2$ is MDS if and only if the submatrix  consisting of any $k$ columns from
$G_2$ in \eqref{eq:1.2} is nonsingular. Let ${\bf u}_1=(0,\ldots, 0,1)^T$, ${\bf u}_2=(0,\ldots,0,1, \delta)^T$, and ${\bf u}_3=(0,\ldots,0,1, \tau, \pi)^T$. Then we divide the proof into the following cases:
\begin{enumerate}
\item  Assume that the submatrix  consisting of $k$ columns from $G_2$ does not contain ${\bf u}_1, {\bf u}_2,$ and  ${\bf u}_3$. The matrix is a Vandermonde one and it is nonsingular of course.

\item Assume that the submatrix contains only ${\bf u}_1$ and other $k-1$ columns from $G_2$. It is also easy to check that the matrix is nonsingular.

\item Assume that the submatrix contains only ${\bf u}_2$ and other $k-1$ columns from $G_2$. Without loss of generality we need to compute the determinant of the matrix:
\begin{equation*}\label{eq10}
B_1=\left(\begin{array}{ccccc}
 1 &1 &\ldots  &1&0\\ 
 \alpha_1 &\alpha_2 &\ldots & \alpha_{k-1}&0\\ 
 \vdots &\vdots &\ddots&\vdots&\vdots \\ 
 \alpha_1^{k-3} &\alpha_2^{k-3} &\ldots & \alpha^{k-3}_{k-1}&0\\ 
 \alpha_1^{k-2} &\alpha_2^{k-2} &\ldots & \alpha^{k-2}_{k-1}&1\\
 \alpha_1^{k-1} &\alpha_2^{k-1} &\ldots & \alpha^{k-1}_{k-1}&\delta\\
\end{array}\right).\end{equation*} Let $A_i$ be the coefficient of $x^i$ in the polynomial $\mbox{det}(B(x))$, where
\begin{equation*}\label{eq11}
B(x)=\left(\begin{array}{ccccc}
 1 &1 &\ldots  &1&1\\ 
 \alpha_1 &\alpha_2 &\ldots & \alpha_{k-1}&x\\ 
 \vdots &\vdots &\ddots&\vdots&\vdots \\ 
 \alpha_1^{k-1} &\alpha_2^{k-1} &\ldots & \alpha^{k-1}_{k-1}&x^{k-1}\\
\end{array}\right).\end{equation*}
Then $\mbox{det}(B_1)=\delta A_{k-1}+A_{k-2}$. Note that $$\sum_{i=0}^{k-1}A_ix^i=\prod_{1\le i<j\le k-1}(\alpha_j-\alpha_i) \prod_{i=1}^{k-1}(x-\alpha_i).$$
Thus $A_{k-1}=\prod_{1\le i<j\le k-1}(\alpha_j-\alpha_i)\neq 0$, $A_{k-2}=-A_{k-1}\sum_{i=1}^{k-1}\alpha_i$ and $\mbox{det}(B_1)\neq 0$ if and only if $\sum_{i=1}^{k-1}\alpha_i\neq \delta.$ Hence,  the matrix is nonsingular if and only if the set $\{\alpha_1, \ldots, \alpha_n\}$ is an $(n,k-1,\delta)$-set in $\gf(q)$.

\item Assume that the submatrix contains only ${\bf u}_3$ and other $k-1$ columns from $G_2$. Without loss of generality we need to compute the determinant of the matrix:
\begin{equation*}\label{eq10}
B_2=\left(\begin{array}{ccccc}
 1 &1 &\ldots  &1&0\\ 
 \alpha_1 &\alpha_2 &\ldots & \alpha_{k-1}&0\\ 
 \vdots &\vdots &\ddots&\vdots&\vdots \\ 
  \alpha_1^{k-4} &\alpha_2^{k-4} &\ldots & \alpha^{k-4}_{k-1}&0\\ 
 \alpha_1^{k-3} &\alpha_2^{k-3} &\ldots & \alpha^{k-3}_{k-1}&1\\ 
 \alpha_1^{k-2} &\alpha_2^{k-2} &\ldots & \alpha^{k-2}_{k-1}&\tau\\
 \alpha_1^{k-1} &\alpha_2^{k-1} &\ldots & \alpha^{k-1}_{k-1}&\pi\\
\end{array}\right).\end{equation*} Similarly to Case (3), $\mbox{det}(B_2)=\pi A_{k-1}+\tau A_{k-2}+A_{k-3}$. Note that $$A_{k-3}=A_{k-1}\sum_{1\le i<j\le k-1}\alpha_i\alpha_j$$ and $\mbox{det}(B_2)\neq 0$ if and only if $$\sum_{1\le i<j\le k-1}\alpha_i\alpha_j+\pi-\tau \sum_{i=1}^{k-1}\alpha_i\neq 0.$$   Hence,  the matrix is nonsingular if and only if $$\sum_{i\neq j\in I}\alpha_i\alpha_j+\pi\neq \tau (\sum_{i\in I} \alpha_i)$$
for any subset
$I$ with size $k-1$ of $\{1, \ldots, n\}$.

\item Assume that the submatrix contains only ${\bf u}_1, {\bf u}_2$ and other $k-2$ columns from $G_2$. It is also easy to check that the matrix is nonsingular.

\item Assume that the submatrix contains only ${\bf u}_1, {\bf u}_3$ and other $k-2$ columns from $G_2$. Without loss of generality we need to compute the determinant of the matrix:
\begin{equation*}\label{eq10}B_3=
\left(\begin{array}{cccccc}
 1 &1 &\ldots  &1&0&0\\ 
 \alpha_1 &\alpha_2 &\ldots & \alpha_{k-2}&0&0\\ 
 \vdots &\vdots &\ddots&\vdots&\vdots &\vdots\\ 
 \alpha_1^{k-4} &\alpha_2^{k-4} &\ldots & \alpha^{k-4}_{k-2}&0&0\\ 
 \alpha_1^{k-3} &\alpha_2^{k-3} &\ldots & \alpha^{k-3}_{k-2}&1&0\\
  \alpha_1^{k-2} &\alpha_2^{k-2} &\ldots & \alpha^{k-2}_{k-2}&\tau&0\\
 \alpha_1^{k-1} &\alpha_2^{k-2} &\ldots & \alpha^{k-2}_{k-2}&\pi&1\\
\end{array}\right).\end{equation*} Hence, the matrix is nonsingular if and only if the set $\{\alpha_1, \ldots, \alpha_n\}$ is an $(n,k-2,\tau)$-set in $\gf(q)$.

\item Assume that the submatrix contains only ${\bf u}_2, {\bf u}_3$ and other $k-2$ columns from $G_2$. 
Without loss of generality we need to compute the determinant of the matrix:
\begin{equation*}\label{eq10}
B_4=\left(\begin{array}{cccccc}
 1 &1 &\ldots  &1&0&0\\ 
 \alpha_1 &\alpha_2 &\ldots & \alpha_{k-2}&0&0\\ 
 \vdots &\vdots &\ddots&\vdots&\vdots&\vdots \\ 
  \alpha_1^{k-4} &\alpha_2^{k-4} &\ldots & \alpha^{k-4}_{k-2}&0&0\\ 
 \alpha_1^{k-3} &\alpha_2^{k-3} &\ldots & \alpha^{k-3}_{k-2}&0&1\\ 
 \alpha_1^{k-2} &\alpha_2^{k-2} &\ldots & \alpha^{k-2}_{k-2}&1&\tau\\
 \alpha_1^{k-1} &\alpha_2^{k-1} &\ldots & \alpha^{k-1}_{k-2}&\delta&\pi\\
\end{array}\right).\end{equation*} 

Then \begin{eqnarray*}
&&\mbox{det}(B_4)=(-1)^{k-2+k}\left|\begin{array}{cccccc}
 1 &1 &\ldots  &1&0\\ 
 \alpha_1 &\alpha_2 &\ldots & \alpha_{k-2}&0\\ 
 \vdots &\vdots &\ddots&\vdots&\vdots \\ 
 \alpha_1^{k-4} &\alpha_2^{k-4} &\ldots & \alpha^{k-4}_{k-2}&0\\ 
 \alpha_1^{k-2} &\alpha_2^{k-2} &\ldots & \alpha^{k-2}_{k-2}&1\\
 \alpha_1^{k-1} &\alpha_2^{k-1} &\ldots & \alpha^{k-1}_{k-2}&\delta\\
\end{array}\right| +(\pi-\tau \delta)\prod_{1\le i<j\le k-2}(\alpha_j-\alpha_i)\\
&=&-\left|\begin{array}{cccccc}
 1 &1 &\ldots  &1\\ 
 \alpha_1 &\alpha_2 &\ldots & \alpha_{k-2}\\ 
 \vdots &\vdots &\ddots&\vdots \\ 
 \alpha_1^{k-4} &\alpha_2^{k-4} &\ldots & \alpha^{k-4}_{k-2}\\ 
 \alpha_1^{k-1} &\alpha_2^{k-1} &\ldots & \alpha^{k-1}_{k-2}\\
\end{array}\right|+\delta  \left|\begin{array}{cccccc}
 1 &1 &\ldots  &1\\ 
 \alpha_1 &\alpha_2 &\ldots & \alpha_{k-2}\\ 
 \vdots &\vdots &\ddots&\vdots \\ 
 \alpha_1^{k-4} &\alpha_2^{k-4} &\ldots & \alpha^{k-4}_{k-2}\\ 
 \alpha_1^{k-2} &\alpha_2^{k-2} &\ldots & \alpha^{k-2}_{k-2}\\
\end{array}\right|+(\pi-\tau \delta)\prod_{1\le i<j\le k-2}(\alpha_j-\alpha_i).
\end{eqnarray*}

From Case (3) we have  $$\left|\begin{array}{cccccc}
 1 &1 &\ldots  &1\\ 
 \alpha_1 &\alpha_2 &\ldots & \alpha_{k-2}\\ 
 \vdots &\vdots &\ddots&\vdots \\ 
 \alpha_1^{k-4} &\alpha_2^{k-4} &\ldots & \alpha^{k-4}_{k-2}\\ 
 \alpha_1^{k-2} &\alpha_2^{k-2} &\ldots & \alpha^{k-2}_{k-2}\\
\end{array}\right|=(\sum_{i=1}^{k-2}\alpha_i)\prod_{1\le i<j\le k-2}(\alpha_j-\alpha_i).$$
By Lemma \ref{gvand}, \begin{eqnarray*}
&&\left|\begin{array}{cccccc}
 1 &1 &\ldots  &1\\ 
 \alpha_1 &\alpha_2 &\ldots & \alpha_{k-2}\\ 
 \vdots &\vdots &\ddots&\vdots \\ 
 \alpha_1^{k-4} &\alpha_2^{k-4} &\ldots & \alpha^{k-4}_{k-2}\\ 
 \alpha_1^{k-1} &\alpha_2^{k-1} &\ldots & \alpha^{k-1}_{k-2}\\
\end{array}\right|\\
&=&\prod_{1\le i<j\le k-2}(\alpha_j-\alpha_i)\left|\begin{array}{cccccc}
 P_{0,2} &0 &\ldots  &0&0\\ 
P_{1,2} &P_{0,3} &\ldots &0& 0\\ 
 \vdots &\vdots &\ddots&\vdots \\ 
 P_{k-5,2} &P_{k-6,3} &\ldots &  P_{0,k-3}&0\\ 
  P_{k-2,2} &P_{k-3,3} &\ldots & P_{3,k-3}& P_{2,k-2}
\end{array}\right|.\\
&=&\prod_{1\le i<j\le k-2}(\alpha_j-\alpha_i)(\sum_{i=1}^{k-2}\alpha_i^2+\sum_{i\le i<j\le k-2}\alpha_i\alpha_j).
\end{eqnarray*}  The last equation holds because of  the fact  that $P_{0,l}=1$ and $$P_{r,l}=\sum\limits_{i_1+\cdots+i_l=r}\prod\limits_{j=1}^{l} x_j^{i_j},$$ where $i_1,\ldots,i_l$ are nonnegative integers.
Hence  $\mbox{det}(B_4)\neq 0$ if and only if $$\pi-\tau\delta-\sum_{i=1}^{k-2}\alpha_i^2-\sum_{1\le i<j\le k-2}\alpha_i\alpha_j-\delta(\sum_{i=1}^{k-2}\alpha_i)\neq 0.$$ Consequently, the matrix is nonsingular if and only if $$\pi-\tau\delta-\sum_{i\in I}\alpha_i^2-\sum_{ i\neq j\in I}\alpha_i\alpha_j+\delta(\sum_{i\in I}\alpha_i)\neq 0$$
for any subset
$I$ with size $k-2$ of $\{1, \ldots, n\}$.

\item Assume that the submatrix contains ${\bf u}_1,{\bf u}_2, {\bf u}_3$ and other $k-3$ columns from $G_2$. It is also easy to check that the matrix is nonsingular.
\end{enumerate}
This completes the proof.
\end{proof}

We have agreed with the reader that $\prod_{ i\neq j\in I}\alpha_i\alpha_j=0$ if $|I|=1.$
We give the following examples, which are all confirmed  by Magma.
\begin{example}{\rm
Let $q=4$ and $k=n=3$. Let $w$ be a primitive element of the finite field $\gf(4)$ with $w^2+w+1=0$.
\begin{itemize}
\item Let $\boldsymbol{\alpha}=(0,1,w)$. From Theorem \ref{thm}, we have 
\begin{eqnarray*}
\begin{cases}  (1) \Longrightarrow \delta \neq 1,w,1+w,\\
(2) \Longrightarrow \tau \neq 0,1,w,\\
(3) \Longrightarrow  \pi\neq \tau, \pi\neq w\tau, w+\pi\neq \tau(1+w),\\
(4) \Longrightarrow  \pi-\tau\delta\neq 0, \pi-\tau\delta-1-\delta\neq 0,\pi-\tau\delta -w^2-\delta w\neq 0.\\
   \end{cases}
\end{eqnarray*} Hence we have $(\delta, \tau,\pi)=(0,1+w,w)$.

\item Let $\boldsymbol{\alpha}=(0,1,1+w)$. From Theorem \ref{thm}, we have 
\begin{eqnarray*}
\begin{cases}  (1) \Longrightarrow \delta \neq 1,w,1+w,\\
(2) \Longrightarrow \tau \neq 0,1,1+w,\\
(3) \Longrightarrow  \pi\neq \tau, \pi\neq (1+w)\tau, 1+w+\pi\neq \tau w,\\
(4) \Longrightarrow  \pi-\tau\delta\neq 0, \pi-\tau\delta-1-\delta\neq 0,\pi-\tau\delta -w-\delta (1+w)\neq 0.\\
   \end{cases}
\end{eqnarray*} Hence we have $(\delta, \tau,\pi)=(0,w,1+w)$.

\item Let $\boldsymbol{\alpha}=(0,w,1+w)$. From Theorem \ref{thm}, we have 
\begin{eqnarray*}
\begin{cases}  (1) \Longrightarrow \delta \neq 1,w,1+w,\\
(2) \Longrightarrow \tau \neq 0,w,1+w,\\
(3) \Longrightarrow  \pi\neq w\tau, \pi\neq (1+w)\tau, 1+\pi\neq \tau,\\
(4) \Longrightarrow  \pi-\tau\delta\neq 0, \pi-\tau\delta-w^2-w\delta\neq 0,\pi-\tau\delta -w-\delta (1+w)\neq 0.\\
   \end{cases}
\end{eqnarray*} Hence we have $(\delta, \tau,\pi)=(0,1,1)$.

\item Let $\boldsymbol{\alpha}=(1,w,1+w)$. From Theorem \ref{thm}, we have 
\begin{eqnarray*}
\begin{cases}  (1) \Longrightarrow \delta \neq 1,w,1+w,\\
(2) \Longrightarrow \tau \neq 1,w,1+w,\\
(3) \Longrightarrow  w+\pi\neq (1+w)\tau, 1+w+\pi\neq w\tau, 1+\pi\neq \tau ,\\
(4) \Longrightarrow  \pi-\tau\delta-1-\delta\neq 0, \pi-\tau\delta-w^2-w\delta\neq 0,\pi-\tau\delta -w-\delta (1+w)\neq 0.\\
   \end{cases}
\end{eqnarray*} Hence we have $(\delta, \tau,\pi)=(0,0,0)$.
\end{itemize}

In this specific case $(q, k, n)=(4, 3, 3)$, there are only four instances where the code $\C_2$ is MDS with parameters $[6,3,4]$ over $\gf(4)$. Remarkably, these four MDS codes are all equivalent, aligning with the findings of a Magma program. The single MDS code that falls under the non-Reed-Solomon type has the same parameters $[6,3,4]$ over $\gf(4)$, and its generator matrix is provided as follows:
\begin{equation*}\left(\begin{array}{cccccc} 
1&1&1&0&0&1\\ 
1&w&w^2&0&1&0\\ 
1&w^2&w&1&0&0\\ 
\end{array}\right).\end{equation*}
}
\end{example}

\begin{example}{\rm   
   Let $q=5$ and $k=n=3$, and $\boldsymbol{\alpha}=(1,2,3)$. Let  $(\delta, \tau,\pi)=(2,0,1).$ Then   
   $\C_2$ is an MDS code of non-Reed-Solomon type with parameters $[6,3,4]$ over $\gf(5)$. 
}
\end{example}

\begin{example}{\rm
 Let $q=7$ and $k=n=3$, and $\boldsymbol{\alpha}=(2,3,5)$. Let $(\delta, \tau,\pi)=(3,0,2). $ Then 
 $\C_2$ is  
 an MDS code with parameters $[6,3,4]$ over $\gf(7)$.  By Magma, there are  other 27 choices of 
 $(\delta, \tau,\pi)$ and some induced MDS codes $\C_2$ are not equivalent. 
 }
\end{example}

 \begin{example}{\rm
 Let $q=8$ and $\gamma$ be a primitive element of the finite field $\gf(8)$.
 
 (1)  If $\boldsymbol{\alpha}=(0,1,\gamma,\gamma^3)$, $n=4$, $k=3$ and $(\delta, \tau,\pi)=(\gamma^6,\gamma^5,\gamma^2) $,  then $\C_2$ is 
 an MDS code with parameters $[7,3,5]$ over $\gf(8)$. 

(2)  If $\boldsymbol{\alpha}=(0,1,\gamma,\gamma^3)$, $n=4$,  $k=4$ and $(\delta, \tau,\pi)=(\gamma^6,\gamma^6,0) $,  then $\C_2$ is 
 an MDS code with parameters $[7,4,4]$ over $\gf(8)$. 
 
 (3)  If $\boldsymbol{\alpha}=(0,1,\gamma,\gamma^2,\gamma^3)$, $n=5$,  and $k=3$ and $(\delta, \tau,\pi)=(0,\gamma^4,\gamma) $,  then $\C_2$ is 
 an MDS code with parameters $[8,3,6]$ over $\gf(8)$. 
 }
\end{example}

%%%%%%%%%%%%%%%%%%%%%%%%%%%%%%%%%%%
\section{When is the code $\mathcal C_2$ NMDS?}

In this section, our focus will be on searching for some conditions for the code $\mathcal{C}_2$ being AMDS 
or NMDS,  which  provides a solution to Open Problem \ref{pro}. 

\subsection{When is the dual code $\C_2^{\bot}$ AMDS? }

In this subsection, we look for conditions under which the dual code $\C_2^{\bot}$ is AMDS.
Since $G_2$ is a parity-check matrix of $\mathcal C_2^{\bot}$, we have the following result.

\begin{theorem}\label{thm: amds}  {\rm
The dual code $\mathcal C_2^{\bot}$  is AMDS if and only if  one of the following  holds. 

(1)  The set $\{\alpha_1, \ldots, \alpha_n\}$ is an $(n,k-2,\tau)$-set in $\gf(q)$ and  there exists a subset $I\subseteq \{1,\ldots, n\}$ with size $k-1$ such that $\sum_{i\in I}\alpha_i=\delta$ or $$\sum_{i\neq j\in I}\alpha_i\alpha_j+\pi= \tau (\sum_{i\in I} \alpha_i);$$ or  there exists a subset $J\subseteq \{1,\ldots, n\}$ with size $k-2$ such that $$\pi+\delta(\sum_{i\in J}\alpha_i)= \tau\delta+\sum_{i\in J}\alpha_i^2+\sum_{ i\neq j\in J}\alpha_i\alpha_j.$$

(2) For any subset
$J$ with size $k-2$ of $\{1, \ldots, n\}$, we have $$\pi\neq\sum_{i\in I} \alpha_i^2+\sum_{i\neq j\in I}\alpha_i\alpha_j$$ and  there exists a subset $I\subseteq \{1,\ldots, n\}$ with size $k-1$ such that $\sum_{i\in I}\alpha_i=\delta$ or $$\sum_{i\neq j\in I}\alpha_i\alpha_j+\pi= \tau (\sum_{i\in I} \alpha_i);$$ or  there exists a subset $J\subseteq \{1,\ldots, n\}$ with size $k-2$ such that $\sum_{i\in J} \alpha_i=\tau$ or $$\pi+\delta(\sum_{i\in J}\alpha_i)= \tau\delta+\sum_{i\in J}\alpha_i^2+\sum_{ i\neq j\in J}\alpha_i\alpha_j.$$
}
\end{theorem}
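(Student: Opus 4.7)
The plan is to exploit the fact that $G_2$ is a parity-check matrix of $\mathcal{C}_2^{\bot}$, so that $\mathcal{C}_2^{\bot}$ being AMDS is equivalent to $d(\mathcal{C}_2^{\bot}) = k$, which in turn amounts to two linear-algebraic requirements on the columns of $G_2$: (i) some set of $k$ columns of $G_2$ is linearly dependent, i.e.\ $\mathcal{C}_2^{\bot}$ is not MDS; and (ii) every set of $k-1$ columns of $G_2$ is linearly independent.

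For (i), I would directly appeal to Theorem~\ref{thm}: failing to be MDS is equivalent to the failure of at least one of the four conditions listed there, and each such failure is precisely one of the existential clauses of the form ``$\exists I$ of size $k-1$ with $\sum_{i \in I}\alpha_i = \delta$'' or ``$\exists J$ of size $k-2$ with $\sum_{i \in J}\alpha_i = \tau$'' that appear on the right-hand side of both Cases (1) and (2) in the present theorem.

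For (ii), I would mirror the case split in the proof of Theorem~\ref{thm}, but this time compute the $(k-1) \times (k-1)$ minors of each $k \times (k-1)$ submatrix of $G_2$. Classifying the $(k-1)$-column subsets by how many and which of $\mathbf{u}_1, \mathbf{u}_2, \mathbf{u}_3$ they contain, every configuration other than ``$k-2$ Vandermonde columns together with $\mathbf{u}_3$'' can be shown to be linearly independent in all cases by selecting a row subset whose resulting minor reduces, via Lemma~\ref{gvand}, to a nonzero Vandermonde-type determinant. The one delicate configuration, namely $k-2$ Vandermonde columns indexed by a set $J$ together with $\mathbf{u}_3$, produces (after evaluating each of its $k$ distinct $(k-1) \times (k-1)$ minors) a rank-deficient matrix precisely when both $\tau = \sum_{i \in J}\alpha_i$ and $\pi = \sum_{i \in J}\alpha_i^2 + \sum_{i<j \in J}\alpha_i\alpha_j$ hold simultaneously. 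Ruling out at least one of these two equations uniformly in $J$ then yields the two alternative universal hypotheses of the theorem: the $(n, k-2, \tau)$-set condition in Case~(1), and the condition $\pi \ne \sum_{i \in J}\alpha_i^2 + \sum_{i \neq j \in J}\alpha_i\alpha_j$ for all $J$ in Case~(2). Coupling the universal hypothesis with the existential failure clauses from step (i) produces the full statement.

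The main technical obstacle will be the careful bookkeeping for the generalized Vandermonde minors attached to the $\mathbf{u}_3$-configuration: one must use Lemma~\ref{gvand} together with the symmetric-function identities for $P_{r, l}$---in particular the identifications of $P_{1, k-2}$ with $\sum_{i \in J}\alpha_i$ and of $P_{2, k-2}$ with $\sum_{i \in J}\alpha_i^2 + \sum_{i<j \in J}\alpha_i\alpha_j$---in order to recognize that $\tau$ and $\pi$ must coincide with precisely these symmetric functions of the $\alpha_i$'s indexed by $J$ in order to force rank deficiency.
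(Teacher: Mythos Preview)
Your proposal is correct and follows essentially the same approach as the paper: the paper likewise reduces to the two requirements that (i) some $k$ columns of $G_2$ be dependent (quoting Theorem~\ref{thm}) and (ii) every $k-1$ columns be independent, and then performs the identical eight-case split on which of $\mathbf{u}_1,\mathbf{u}_2,\mathbf{u}_3$ appear, finding that only the configuration ``$k-2$ Vandermonde columns together with $\mathbf{u}_3$'' can fail. The only technical difference is that in that critical case the paper argues via the linear system $D_1 x=\mathbf{0}$ and the power-sum identities already computed in the proof of Theorem~\ref{ext}, rather than evaluating the $(k-1)\times(k-1)$ minors through Lemma~\ref{gvand} as you propose; both routes yield the same simultaneous conditions $\tau=\sum_{i\in J}\alpha_i$ and $\pi=\sum_{i\in J}\alpha_i^2+\sum_{i<j\in J}\alpha_i\alpha_j$.
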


\begin{proof} The code $\mathcal C_2^{\bot}$ is  AMDS if and only if it has parameters $[n+3,n+3-k,k]$. By definition,  $\mathcal C_2^{\bot}$ has minimum distance $d=k$ if and only if the following hold:
\begin{enumerate}
\item  Any $k-1$ columns from $G_2$ in Equation \eqref{eq:1.2} are linearly independent.

\item There exists a  submatrix consisting of certain $k$ columns from $G_2$  in Equation \eqref{eq:1.2} is singular. 
\end{enumerate}

Let ${\bf u}_1=(0,\ldots, 0,1)^T$, ${\bf u}_2=(0,\ldots,0,1, \delta)^T$, and ${\bf u}_3=(0,\ldots,0,1, \tau, \pi)^T$. Then we divide the proof into the following cases. 

\begin{enumerate}
\item  Assume that the submatrix  consisting of $k-1$ columns from $G_2$ does not contain ${\bf u}_1, {\bf u}_2,$ and  ${\bf u}_3$. 
Deleting the last row of this submmatrix results in a subsubmatrix which is a Vandermonde subsubmartix.  
Hence, the $k-1$ columns of this submatrix are linearly independent.

\item Assume that the submatrix contains only ${\bf u}_1$ and other $k-2$ columns from $G_2$. It is also easy to check that the columns of this submatrix are also linearly independent.

\item Assume that the submatrix contains only ${\bf u}_2$ and other $k-2$ columns from $G_2$.  This 
submatrix has a Vandermonde subsubmartix.  Hence,  the columns of this submatrix are linearly independent.

\item Assume that the submatrix contains only ${\bf u}_3$ and other $k-2$ columns from $G_2$. Without loss of generality we need to compute the determinant of the matrix:
\begin{equation*}\label{eq10}
D_1=\left(\begin{array}{ccccc}
 1 &1 &\ldots  &1&0\\ 
 \alpha_1 &\alpha_2 &\ldots & \alpha_{k-2}&0\\ 
 \vdots &\vdots &\ddots&\vdots&\vdots \\ 
  \alpha_1^{k-4} &\alpha_2^{k-4} &\ldots & \alpha^{k-4}_{k-2}&0\\ 
 \alpha_1^{k-3} &\alpha_2^{k-3} &\ldots & \alpha^{k-3}_{k-2}&1\\ 
 \alpha_1^{k-2} &\alpha_2^{k-2} &\ldots & \alpha^{k-2}_{k-2}&\tau\\
 \alpha_1^{k-1} &\alpha_2^{k-1} &\ldots & \alpha^{k-1}_{k-2}&\pi\\
\end{array}\right).\end{equation*} 
By \eqref{eq:3.2}, $D_1(x_1,x_2,\ldots, x_{k-1})^T={\bf 0}$ has a nonzero solution $(w_1, \ldots, w_{k-2},-1)$ if and only if $\tau=\sum_{i=1}^{k-2}\alpha_i$ and $\pi=-\sum_{1\le i<j\le k-2}\alpha_i\alpha_j+(\sum_{i=1}^{k-2} \alpha_i)^2$. Hence those $k-1$ columns  are linearly independent if and only if $$\tau \neq \sum_{i\in I} \alpha_i \mbox{ or }\pi\neq \sum_{i\in I} \alpha_i^2+\sum_{i\neq j\in I}\alpha_i\alpha_j$$
for any subset
$I$ with size $k-2$ of $\{1, \ldots, n\}$.

\item Assume that the submatrix contains only ${\bf u}_1, {\bf u}_2$ and other $k-3$ columns from $G_2$. Deleting the $(k-2)$-th row, we can obtain a nonsingular subsubmartix.  Hence, the columns of the submatrix  are linearly independent.

\item Assume that the submatrix contains only ${\bf u}_1, {\bf u}_3$ and other $k-3$ columns from $G_2$. Deleting the $(k-1)$-th row, we can obtain a nonsingular subsubmartix.  Hence, the columns of the submatrix 
are linearly independent.

\item Assume that the submatrix contains only ${\bf u}_2, {\bf u}_3$ and other $k-3$ columns from $G_2$. 
Deleting the last row, we can obtain a nonsingular subsubmartix.  Hence, the columns of the submatrix 
are linearly independent.

\item Assume that the submatrix contains ${\bf u}_1,{\bf u}_2, {\bf u}_3$ and other $k-4$ columns from $G_2$.  Deleting the $(k-3)$-th row, we can obtain a nonsingular subsubmartix. Hence, the columns of the submatrix 
are linearly independent. 
\end{enumerate}

In a word,  any $k-1$ columns  from $G_2$ in Equation \eqref{eq:1.2}  are linearly independent if and only if $$\tau \neq \sum_{i\in I} \alpha_i \mbox{ or }\pi\neq\sum_{i\in I} \alpha_i^2+\sum_{i\neq j\in I}\alpha_i\alpha_j$$
for any subset
$I$ with size $k-2$ of $\{1, \ldots, n\}$. In this case, the dual code $\mathcal C_2^{\bot}$ has parameters $[n+3, n+3-k, \ge k]$. Then the desired result follows from Theorem \ref{thm}.
\end{proof}

%Similar to Theorem \ref{thm: amds}, we have the following theorem from Theorem \ref{thm: dual} and the fact that $H$ is a parity-check matrix of $\mathcal C_2$.

\subsection{Parity-check matrix of $\mathcal C_2$}

In this subsection, we will give a parity-check matrix of   the code $\mathcal C_2$.

\begin{theorem}\label{thm: dual} {\rm
Suppose that $\delta, \tau, \pi \in \gf(q)$, ${w_i} = \frac{1}{\prod_{1\leq j\leq n,j\neq i}(\alpha_i-\alpha_j)}$ was given in \eqref{eq:3.3}  with $i=1,2,\ldots,n$, and  $$a=\sum_{i=1}^n\alpha_i\mbox{ and }b=-\sum_{1\le i<j\le n}\alpha_i\alpha_j+(\sum_{i=1}^n \alpha_i)^2.$$
Then 
\begin{equation}\label{eq:3.1} H=
\left( \begin{array}{ccccccc}
w_1 & w_2& \ldots &  w_n & 0 &0&0\\
w_1\alpha_1& w_2\alpha_2& \ldots &  w_n \alpha_{n} & 0 &0&0\\
\vdots& \vdots& \ldots & \vdots & \vdots &\vdots&\vdots\\
w_1\alpha_1^{n-k-1}& w_2\alpha_2^{n-k-1}& \ldots & w_n  \alpha_{n}^{n-k-1} & 0 &0&0\\
w_1\alpha_1^{n-k}& w_2\alpha_2^{n-k}& \ldots & w_n  \alpha_{n}^{n-k} & -1 &0&0\\
w_1\alpha_1^{n+1-k}&w_2 \alpha_2^{n+1-k}& \ldots & w_n  \alpha_{n}^{n+1-k} & \delta-a &-1&0\\
w_1\alpha_1^{n+2-k}&w_2 \alpha_2^{n+2-k}& \ldots &w_n   \alpha_{n}^{n+2-k} & \pi-b-\delta(\tau-a) &\tau-a&-1\\
\end{array} \right)
\end{equation} is a   parity-check matrix of $\mathcal C_2$.
}
\end{theorem}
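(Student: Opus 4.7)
The plan is to verify three things: (i) $H$ has size $(n-k+3)\times(n+3)$, which matches $\dim \mathcal{C}_2^{\bot}=n+3-k$; (ii) $G_2 H^T = 0$; and (iii) $H$ has full row rank $n-k+3$. Together these force the row span of $H$ to equal $\mathcal{C}_2^{\bot}$, which is what is needed. The dimension check is immediate.

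The central computational tool is the identity $\sum_{i=1}^n w_i \alpha_i^j = \varepsilon_j$, where $\varepsilon_j = 0$ for $0 \le j \le n-2$, $\varepsilon_{n-1} = 1$, $\varepsilon_n = a$, and $\varepsilon_{n+1} = b$. The first three cases follow from Cramer's rule applied to \eqref{eq:3.2}, while $\varepsilon_n = a$ and $\varepsilon_{n+1} = b$ were derived in the proof of Theorem \ref{ext} using the recursion $\alpha_i^n = -\sum_{j=0}^{n-1} a_j \alpha_i^j$ coming from the minimal polynomial $m(x) = \prod_j(x-\alpha_j)$.

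For step (ii), index the rows of $G_2$ by $r\in\{0,\ldots,k-1\}$ and those of $H$ by $s\in\{0,\ldots,n-k+2\}$, and denote by $(g_{r,1},g_{r,2},g_{r,3})$ and $(h_{s,1},h_{s,2},h_{s,3})$ the respective trailing triples in the last three columns. These triples are nonzero only for $r\in\{k-3,k-2,k-1\}$ and $s\in\{n-k,n-k+1,n-k+2\}$. The $(r,s)$-entry of $G_2 H^T$ equals $\varepsilon_{r+s}+\sum_{j=1}^3 g_{r,j}h_{s,j}$. Whenever at least one trailing triple is zero, we also have $r+s\le n-2$, so $\varepsilon_{r+s}=0$ and the entry vanishes trivially. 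The only real work is the $3\times 3$ block of pairs with $r\in\{k-3,k-2,k-1\}$ and $s\in\{n-k,n-k+1,n-k+2\}$, for which $r+s$ ranges over $\{n-3,\ldots,n+1\}$. A direct tabulation of these nine cases shows that the trailing entries $-1,\ \delta-a,\ -1,\ \pi-b-\delta(\tau-a),\ \tau-a,\ -1$ of $H$ are precisely engineered so that they cancel the contributions $1$, $a$, and $b$ arising from $\varepsilon_{n-1}$, $\varepsilon_n$, and $\varepsilon_{n+1}$ respectively.

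Step (iii) follows by observing that the first $n$ columns of $H$ equal $V^T \cdot \mathrm{diag}(w_1,\ldots,w_n)$, where $V$ is the $n\times(n-k+3)$ matrix with $(i,j)$-entry $\alpha_i^{j}$ for $0\le j\le n-k+2$. Since $k\ge 3$ forces $n-k+3\le n$, any $(n-k+3)\times(n-k+3)$ minor of $V$ drawn from distinct $\alpha_i$ is a classical Vandermonde determinant, hence nonzero; combined with $w_i\ne 0$, this gives full row rank of $H$ already from the first $n$ columns. The main obstacle is purely bookkeeping in step (ii): organizing the nine cases and confirming that the algebraically engineered correction terms in the last three columns of $H$ absorb the nonzero tail $\varepsilon_{n-1},\varepsilon_n,\varepsilon_{n+1}$ in each case. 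Once the pattern is read off, the cancellations are mechanical consequences of the formula for $\varepsilon_j$.
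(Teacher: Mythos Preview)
Your proposal is correct and follows essentially the same approach as the paper: verify $G_2H^T=0$ using the power-sum identities $\sum_i w_i\alpha_i^j=\varepsilon_j$ (which the paper records as Equations \eqref{eq:3.2}, \eqref{eq:3.4}, \eqref{eq:3.5}), and then note that $H$ has full row rank. The paper's proof is quite terse---it simply asserts ``$G_2H^T=0$'' and ``$H$ is of row-full-rank'' without writing out the nine-case tabulation or the Vandermonde minor argument---so your version is a more explicit rendering of the same argument rather than a different route.
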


\begin{proof} 
By Equation \eqref{eq:3.2}, we have \begin{equation}\label{eq:3.4}
a=\sum_{i=1}^nw_i\alpha_i^n=\sum_{i=1}^n\alpha_i
\end{equation} and 
\begin{eqnarray} \label{eq:3.5}
b&=&\sum_{i=1}^nw_i\alpha_i^{n+1}=-\sum_{1\le i<j\le n}\alpha_i\alpha_j+(\sum_{i=1}^n \alpha_i)^2.
\end{eqnarray} 
By Equation \eqref{eq:3.2}, $G_2H^T=0$ and hence $H$ in Equation \eqref{eq:3.1}  is a parity-check matrix of the code $\mathcal C_2$ from the facts that $G_2H^T=0$ and $H$ is of row-full-rank.  This completes the proof.
\end{proof}

\subsection{When is the code $\C_2^{}$ AMDS? }

Based on Theorem \ref{thm: dual} and the fact that $H$ is a parity-check matrix of $\mathcal{C}_2$, we obtain  the following theorem.

\begin{theorem}\label{thm: amds2}  {\rm
The code $\mathcal C_2$  is AMDS if and only if  one of the following  holds:

(1)  The set $\{\alpha_1, \ldots, \alpha_n\}$ is an $(n,k-1,\delta)$-set in $\gf(q)$ and  there exists a subset $I\subseteq \{1,\ldots, n\}$ with size $k-1$ such that  $$\sum_{i\neq j\in I}\alpha_i\alpha_j+\pi= \tau (\sum_{i\in I} \alpha_i);$$ or  there exists a subset $J\subseteq \{1,\ldots, n\}$ with size $k-2$ such that $\sum_{i\in J} \alpha_i=\tau$ or $$\pi+\delta(\sum_{i\in J}\alpha_i)= \tau\delta+\sum_{i\in J}\alpha_i^2+\sum_{ i\neq j\in J}\alpha_i\alpha_j.$$

(2) For any subset
$J$ with size $n+1-k$ of $\{1, \ldots, n\}$, we have $$ \pi-b-\delta(\tau-a)\neq\sum_{i\in J} \alpha_i^2+\sum_{i\neq j\in J}\alpha_i\alpha_j,$$ and  there exists a subset $I\subseteq \{1,\ldots, n\}$ with size $k-1$ such that $\sum_{i\in I}\alpha_i=\delta$ or $$\sum_{i\neq j\in I}\alpha_i\alpha_j+\pi= \tau (\sum_{i\in I} \alpha_i);$$ or  there exists a subset $J\subseteq \{1,\ldots, n\}$ with size $k-2$ such that $\sum_{i\in J} \alpha_i=\tau$ or $$\pi+\delta(\sum_{i\in J}\alpha_i)= \tau\delta+\sum_{i\in J}\alpha_i^2+\sum_{ i\neq j\in J}\alpha_i\alpha_j.$$
}
\end{theorem}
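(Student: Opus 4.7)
The plan is to use $H$ from Theorem~\ref{thm: dual} as a parity-check matrix of $\C_2$, so that the AMDS property, i.e., $d(\C_2)=n+3-k$, translates into the conjunction of two requirements: first, every $n+2-k$ columns of $H$ must be linearly independent; and second, some $n+3-k$ columns of $H$ must be linearly dependent. The second requirement is equivalent to $\C_2$ being non-MDS, which by Theorem~\ref{thm} amounts to the failure of at least one of the four MDS conditions listed there. That fourfold disjunction produces exactly the compound \emph{existence} clause appearing identically in both conditions (1) and (2) of the statement; note that in (1) the failure of MDS condition~1 is not included, because its $(n,k-1,\delta)$-set property is the leading hypothesis.

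For the linear-independence requirement, I will mirror the eight-case enumeration from the proof of Theorem~\ref{thm: amds}, applied to $H$ in place of $G_2$. Writing $\mathbf{v}_1,\mathbf{v}_2,\mathbf{v}_3$ for the last three columns of $H$, I partition candidate $(n+2-k)$-element column sets by which subset of $\{\mathbf{v}_1,\mathbf{v}_2,\mathbf{v}_3\}$ they contain. Exactly as in the proof of Theorem~\ref{thm: amds}, in seven of the eight subcases the triangular $3\times 3$ pattern in the last three rows of $H$, together with a Vandermonde-like subsubmatrix of the weighted powers $w_i\alpha_i^j$, yields a nonsingular square subsubmatrix, forcing linear independence. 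In the sole remaining subcase, containing $\mathbf{v}_1$ together with $n+1-k$ Vandermonde-type columns indexed by some subset $I\subseteq\{1,\ldots,n\}$ of size $n+1-k$, Cramer's rule applied via the restrictions to $I$ of the identities \eqref{eq:3.4}--\eqref{eq:3.5} from the proof of Theorem~\ref{ext} reduces the existence of a linear dependence to the simultaneous conditions $\sum_{i\in I}\alpha_i = a-\delta$ and $\sum_{i\in I}\alpha_i^2+\sum_{i\neq j\in I}\alpha_i\alpha_j = b+\delta(\tau-a)-\pi$. Passing to the complement $I^c$ of size $k-1$, the first condition becomes $\sum_{i\in I^c}\alpha_i=\delta$, which is precisely a violation of the $(n,k-1,\delta)$-set property. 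Therefore the linear-independence requirement holds whenever either $\{\alpha_1,\ldots,\alpha_n\}$ is an $(n,k-1,\delta)$-set (giving the leading clause of (1)), or the second condition fails for every $I$ of size $n+1-k$ (giving the leading clause of (2)).

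Combining the two requirements, the stated dichotomy follows. The main technical obstacle I foresee lies in the careful sign bookkeeping and the parameter substitution $\delta\mapsto\delta-a$, $\tau\mapsto\tau-a$, $\pi\mapsto\pi-b-\delta(\tau-a)$ implicit in moving from $G_2$ to $H$; in particular, the Cramer computation in the nontrivial subcase must be shown to produce the exact right-hand side $b+\delta(\tau-a)-\pi$ (rather than a sign-transposed variant), and one must verify that in each of the seven \emph{trivial} subcases the triangular block really does interact with the weighted Vandermonde structure so as to rule out singularity.
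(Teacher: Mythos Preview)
Your plan is exactly the paper's approach: its proof is a single sentence deferring to the parity-check matrix $H$ of Theorem~\ref{thm: dual}, and your eight-case dualization of the argument from Theorem~\ref{thm: amds} is precisely what that sentence intends. Your identification of the sole nontrivial subcase (only $\mathbf{v}_1$ together with $n{+}1{-}k$ weighted Vandermonde columns) and the Cramer reduction to the simultaneous pair of conditions are correct, and the sign $b+\delta(\tau-a)-\pi$ you flag is indeed what the computation yields.
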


Combining Theorems \ref{thm}, \ref{thm: amds} and \ref{thm: amds2} yields the following corollary.

\begin{corollary}\label{cor:nmds}{\rm 
Suppose that the set $\{\alpha_1, \ldots, \alpha_n\}$ is an $(n,k-1, \delta )$-set and an $(n,k-2, \tau)$-set in $\gf (q)$. Then the code $\C_2$ is NMDS if and only if 
\begin{enumerate}
\item  There exists a subset
$I$ with size $k-1$ of $\{1, \ldots, n\}$ such that  $$\sum_{i\neq j\in I}\alpha_i\alpha_j+\pi= \tau (\sum_{i\in I} \alpha_i);\mbox{  or }$$ 

 \item There exists a subset
$I$ with size $k-2$ of $\{1, \ldots, n\}$  such that $$\pi+\delta(\sum_{i\in I}\alpha_i)= \tau\delta+\sum_{i\in I}\alpha_i^2+\sum_{ i\neq j\in I}\alpha_i\alpha_j.$$
\end{enumerate}
}
\end{corollary}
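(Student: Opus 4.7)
The plan is to derive this corollary directly from Theorems~\ref{thm}, \ref{thm: amds}, and \ref{thm: amds2}. By definition $\C_2$ is NMDS precisely when both $\C_2$ and $\C_2^{\bot}$ are AMDS, so it suffices to show that under the stated hypotheses each of these two AMDS conditions is equivalent to the disjunction of conditions~(1) and (2) in the corollary.

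First I would exploit the hypotheses to trim the statements of Theorems~\ref{thm: amds} and \ref{thm: amds2}. The $(n,k-1,\delta)$-set hypothesis makes any clause of the form ``there exists $I$ of size $k-1$ with $\sum_{i\in I}\alpha_i=\delta$'' vacuous, while the $(n,k-2,\tau)$-set hypothesis makes any clause of the form ``there exists $J$ of size $k-2$ with $\sum_{i\in J}\alpha_i=\tau$'' vacuous. After striking these impossible sub-clauses, the existence parts appearing in both theorems collapse to exactly the two conditions displayed in the corollary: either some $I$ of size $k-1$ with $\sum_{i\neq j\in I}\alpha_i\alpha_j+\pi=\tau\sum_{i\in I}\alpha_i$ (call this (A)), or some $J$ of size $k-2$ with $\pi+\delta\sum_{i\in J}\alpha_i=\tau\delta+\sum_{i\in J}\alpha_i^2+\sum_{i\neq j\in J}\alpha_i\alpha_j$ (call this (B)).

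Next I would observe that the ``opening clause'' of condition~(1) in each of the two theorems is precisely one of the hypotheses of the corollary: condition~(1) of Theorem~\ref{thm: amds} requires the $(n,k-2,\tau)$-set, and condition~(1) of Theorem~\ref{thm: amds2} requires the $(n,k-1,\delta)$-set. Both are granted, so under our hypotheses condition~(1) of each theorem is equivalent to ``(A) or (B)''. It remains to verify that condition~(2) of each theorem cannot produce an AMDS instance outside ``(A) or (B)'': in each case condition~(2) carries the same existence sub-clauses as condition~(1), so if (2) is satisfied then (A) or (B) must hold too, and then condition~(1) is automatically satisfied by our hypotheses. Consequently, the disjunction ``(1) or (2)'' in each theorem reduces, under our hypotheses, to ``(A) or (B)''.

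Putting these two reductions together, Theorem~\ref{thm: amds} gives that $\C_2^{\bot}$ is AMDS iff (A) or (B), and Theorem~\ref{thm: amds2} gives that $\C_2$ is AMDS iff (A) or (B). Hence $\C_2$ is NMDS iff (A) or (B), which is exactly the claim. The only delicate point is making sure that the extra first-clause constraints distinguishing condition~(2) from condition~(1) (such as the ``$\pi\neq\sum_{i\in J}\alpha_i^2+\sum_{i\neq j\in J}\alpha_i\alpha_j$'' requirement in Theorem~\ref{thm: amds}) do not yield additional AMDS cases missed by condition~(1); this is the observation dispatched in the previous paragraph.
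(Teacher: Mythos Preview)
Your proposal is correct and follows essentially the same approach as the paper, which simply states that the corollary follows by combining Theorems~\ref{thm}, \ref{thm: amds}, and \ref{thm: amds2}. You spell out in more detail how the hypotheses kill the ``$\sum_{i\in I}\alpha_i=\delta$'' and ``$\sum_{i\in J}\alpha_i=\tau$'' sub-clauses and why case~(2) of each theorem contributes nothing beyond case~(1) under those hypotheses, which is precisely the bookkeeping the paper leaves to the reader.
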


%%%%%%%%%%%%%%%%%%%%%%%%%%%%%%%%%%%
\section{Applications of the main results}

In this section, we will explore several applications of our main results derived in the previous sections.

\subsection{Covering radii of the dual of the Roth-Lempel codes}

In this subsection, our objective is to present the cover radii of the dual codes of selected Roth-Lempel codes.
It is known that the cover radius of an $[n,k]$ MDS code is $n-k$ or $n-1-k$.  By Lemma \ref{lem: RL}, the Roth-Lempel  code is  MDS  if and only if the set $\{\alpha_1, \ldots, \alpha_{n}\}$ is an $(n, k-1, \delta)$-set in $\gf(q)$. Theorem \ref{thm} tells us when the code $\C _2$ is MDS. Hence, a Roth-Lempel  code is MDS when a companioning code $\C _2$ is MDS.
By  Lemma \ref{thm6}, the following theorem holds.

\begin{theorem}\label{thm:cov}{\rm 
Suppose that the conditions in Theorem \ref{thm} are satisfied and $4\le k+1  \le n \le q$.
Then the covering radius $\rho(RL({\bf a},  \delta, k,n+2)^{\bot})=k$ and the vector $\bu$ in  Theorem \ref{ext} is a deep hole of  $RL({\bf a},\delta, k,n+2)^{\bot}$.
}
\end{theorem}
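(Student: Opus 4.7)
The plan is to present Theorem \ref{thm:cov} as a direct application of the chain Theorem \ref{thm} $\Rightarrow$ Theorem \ref{ext} $\Rightarrow$ Lemma \ref{thm6}, so almost no fresh computation should be needed. First, I would spell out that the hypotheses of Theorem \ref{thm} guarantee two things simultaneously: on the one hand, the extended code $\C_2$ generated by $G_2$ is an $[n+3,k,n-k+4]$ MDS code over $\gf(q)$; on the other hand, condition (1) of Theorem \ref{thm} (the $(n,k-1,\delta)$-set condition) is precisely the hypothesis of Lemma \ref{lem: RL}, so the Roth-Lempel code $RL(\boldsymbol{\alpha},\delta,k,n+2)$ itself is an $[n+2,k,n-k+3]$ MDS code. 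Thus both a base code and its extension in the sense of \eqref{eq3} are already known to be MDS.

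Second, I would invoke Theorem \ref{ext} to identify the MDS code $\C_2$ with $\overline{RL(\boldsymbol{\alpha},\delta,k,n+2)}(\bu)$, where the extension vector $\bu=(u_1,\ldots,u_{n+2})$ is exactly the one displayed in the statement of Theorem \ref{ext}. This is the only place where we need the explicit form of $\bu$; the point of Theorem \ref{ext} has been to realize the non-Reed-Solomon MDS code $\C_2$ as an \eqref{eq3}-type extension of an MDS code.

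Third, I would apply Lemma \ref{thm6} to $\C:=RL(\boldsymbol{\alpha},\delta,k,n+2)$ with this $\bu$. The lemma's biconditional says that $\overline{\C}(\bu)$ is MDS iff $\rho(\C^\bot)=k$ and $\bu$ is a deep hole of $\C^\bot$. Since step one established that $\overline{\C}(\bu)=\C_2$ is MDS, the forward direction of the biconditional immediately yields $\rho(RL(\boldsymbol{\alpha},\delta,k,n+2)^\bot)=k$ and that $\bu$ is a deep hole of $RL(\boldsymbol{\alpha},\delta,k,n+2)^\bot$, which is exactly the assertion of Theorem \ref{thm:cov}.

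There is essentially no obstacle once the preceding theorems are in hand; the only sanity checks I would perform are the matching of parameters (length $n+2$ for the Roth-Lempel code, $k$ for its dimension, $n+3$ for the extension $\C_2$, and the $n-k+4$ vs.\ $n-k+3$ minimum distances consistent with a deep hole of covering radius $k$ rather than $k-1$) and a brief comment noting that the hypothesis $4\le k+1\le n\le q$ is inherited from both Theorem \ref{thm} and Lemma \ref{thm6}. So the write-up really amounts to two or three lines of deduction after citing the three results.
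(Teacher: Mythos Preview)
Your proposal is correct and follows essentially the same approach as the paper: the paper's proof is precisely the chain you describe, citing Lemma~\ref{lem: RL} (via condition~(1) of Theorem~\ref{thm}) to get that $RL(\boldsymbol{\alpha},\delta,k,n+2)$ is MDS, Theorem~\ref{thm} to get that $\C_2$ is MDS, and then Lemma~\ref{thm6} (implicitly through Theorem~\ref{ext}) to conclude. Your write-up is in fact more explicit than the paper's, which compresses the argument into a couple of sentences preceding the theorem statement.
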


To the best of our knowledge, this is the first time that the covering radii and deep holes of dual codes of Roth-Lempel codes are determined. We have the following examples to illustrate Theorem \ref{thm:cov}, which are confirmed by Magma.

\begin{example}{\rm 
Let $n=4$ and $\gf(8)^*=\langle w \rangle $.  Suppose that ${\bf a}=(0,1,w,w^2)$, $(\delta, \tau, \pi)=(w^2,w^2, w^5)$, and $k=3$. Then the code $\C_2$ is MDS and the covering radius $\rho(RL({\bf a},\delta, k,n+2)^{\bot})=3$. 
}
\end{example}

\begin{example}{\rm 
Let $n=4$ and $\gf(9)^*=\langle w \rangle $.  Suppose that ${\bf a}=(0,1,w,w^2)$, $(\delta, \tau, \pi)=(w^6,w^5, 1)$, and $k=3$. Then the code $\C_2$ is MDS and the covering radius $\rho(RL({\bf a},\delta, k,n+2)^{\bot})=3$.
}
\end{example}

\begin{example}{\rm 
Let $n=4$ and $q=5$.  Suppose that ${\bf a}=(1,2,3,4)$ and $k=3$. By Magma, there are no vector $(\delta, \tau, \pi)\in \gf(5)^3$ such that the code $\C_2$ is MDS and  the covering radius $\rho(RL({\bf a},\delta, k,n+2)^{\bot})=3$ when $\delta =0$ and 2 otherwise.
}
\end{example}

\subsection{(Almost) Optimal extendable linear codes}

In this subsection, as an application of Theorem \ref{thm}, we will obtain some (almost) optimal extendable linear codes with dimension three.

\begin{definition} \label{defi33}{\rm  \cite[Definition 1]{CLM}
Let $\bD$ be an $[ n, k ]$ linear code over a ﬁnite ﬁeld $\gf(q)$ and $G$ a generator matrix of $\bD$. Let $\bD'$ be the $[ n + k, k ]$ linear code with generator matrix $[ G : I_k ]$, where $I_k$ is the $k \times k$ identity matrix. We say that $\bD$ is {\it optimally extendable} if $\bD'$ has the same dual distance $d^{\bot}$ as $\bD$. Let $t$ be a positive integer such that $t \le d^{\bot}$. We say that $\bD$ is $t$-extendable if the dual distance of $\bD'$ is at least $t$. If $t$ is close to $d^{\bot}$,  $\bD$ is called almost optimally extendable.
}
\end{definition}

%Note that the ability of being (almost) optimally extendable depends on the choice of $G$ and properly speaking, we should speak of the pair $(\bd, G)$ to be optimally extendable.  This conﬁrms that (almost) optimally extendable codes are difﬁcult to construct. Recently, Carlet, Li and Mesnager \cite{CLM} presented some (almost) optimally extendable codes by employing some previous results on (puncturing) primitive irreducible cyclic codes and the ﬁrst order Reed–Muller codes. In \cite{QYH}, Quan, Yue, and Hu constructed three classes of (almost) optimally extendable linear codes by using irreducible cyclic codes, maximum-distance-separable codes, and  near maximum-distance-separable codes.

Indeed, constructing (almost) optimally extendable codes is a challenging task, and the choice of the generating matrix $G$ plays a crucial role. It is important to consider the pair $(\bD, G)$ to fully characterize the optimally extendable property. In the realm of coding theory, Carlet, Li, and Mesnager \cite{CLM} made significant contributions by introducing (almost) optimally extendable codes. They utilized previous results on primitive irreducible cyclic codes and first-order Reed-Muller codes to construct these codes. By leveraging these constructions, they were able to present examples of (almost) optimally extendable codes, further advancing the field. Additionally, Quan, Yue, and Hu \cite{QYH} made notable contributions by constructing three classes of (almost) optimally extendable linear codes. They employed irreducible cyclic codes, maximum-distance-separable codes, and near maximum-distance-separable codes in their constructions. These codes exhibit the desired optimally extendable property or come close to achieving it, highlighting the potential of different code families in achieving this challenging objective. The works of Carlet, Li, and Mesnager \cite{CLM} and  Quan, Yue, and Hu \cite{QYH} shed light on the construction of (almost) optimally extendable codes and provide valuable insights into the utilization of various code families to achieve this property. %These advancements contribute to the development of coding theory and offer potential applications in error-correction coding.

As an application of our main results,  we have the following result. 

\begin{theorem}{\rm \label{thm:extcode}
Let $\alpha_{1},\ldots,\alpha_{n}$ be distinct elements in $\gf(q)$ and ${\boldsymbol{ \alpha}}=(\alpha_1, \ldots, \alpha_n)$.  

(1)  Then $GR{S_3}({\boldsymbol{ \alpha}},{\bf 1})$ is  an optimally extendable code if and only if  $\alpha_i\neq 0$ for all $1\le i\le n$ and $\alpha_i+\alpha_j\neq 0$ for $1\le i<j\le n$.
 
(2) Then $GR{S_3}({\boldsymbol{ \alpha}},{\bf 1})$ is  an almost optimally extendable code if and only if  $\alpha_i\neq 0$ for all $1\le i\le n$ and  $\alpha_i+\alpha_j= 0$ for some $1\le i<j\le n$.

}
\end{theorem}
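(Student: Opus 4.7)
The plan is to identify the extension $\bD'$ as a column permutation of the code $\mathcal C_2$ with $k=3$ and $(\delta,\tau,\pi)=(0,0,0)$, and then invoke Theorem \ref{thm}. Taking the canonical generator matrix of $\bD=GRS_3(\boldsymbol{\alpha},{\bf 1})$ whose columns are $(1,\alpha_i,\alpha_i^2)^T$, the matrix $[G:I_3]$ agrees with the matrix $G_2$ from \eqref{eq:1.2} on its first $n$ columns, while the last three columns of $G_2$ under the chosen parameters are $(0,0,1)^T,(0,1,0)^T,(1,0,0)^T$, namely $I_3$ with its column order reversed. Since column permutations preserve both the minimum distance and the dual distance, $d^\bot(\bD')$ coincides with the dual distance of this specialized $\mathcal C_2$.

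Since $\bD$ is an $[n,3,n-2]$ MDS code, $d^\bot(\bD)=4$. For part (1), $\bD$ is optimally extendable iff $d^\bot(\bD')=4$, which requires every three columns of $G_2$ to be linearly independent; equivalently, $\mathcal C_2$ is MDS. Substituting $k=3$ and $\delta=\tau=\pi=0$ into the four conditions of Theorem \ref{thm}: conditions (3) and (4) collapse respectively to $\alpha_i\alpha_j\neq 0$ (for $|I|=2$) and $\alpha_i^2\neq 0$ (for $|I|=1$), both implied by condition (2); conditions (1) and (2) themselves reduce to $\alpha_i+\alpha_j\neq 0$ for all $i<j$ and $\alpha_i\neq 0$ for all $i$. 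This yields part (1).

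For part (2), almost optimal extendability corresponds to $d^\bot(\bD')=3$, meaning that every two columns of $G_2$ are linearly independent but some three are linearly dependent. The $\alpha$-columns $(1,\alpha_i,\alpha_i^2)^T$ are pairwise non-proportional by the distinctness of the $\alpha_i$, and such an $\alpha$-column is proportional to the last column $(1,0,0)^T$ exactly when $\alpha_i=0$; the remaining pairs of columns are trivially independent. Hence pairwise independence of all columns of $G_2$ is equivalent to $\alpha_i\neq 0$ for all $i$. Under this assumption, the analysis from part (1) shows that $\mathcal C_2$ fails to be MDS iff condition (1) of Theorem \ref{thm} is the one that is violated, i.e., iff $\alpha_i+\alpha_j=0$ for some $i<j$, which gives part (2). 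The principal step is the identification of $\bD'$ with a specialization of $\mathcal C_2$; once this is in place, specialization of Theorem \ref{thm} together with the elementary pairwise independence check closes both parts.
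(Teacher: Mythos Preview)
Your proof is correct and follows essentially the same route as the paper: both identify $\bD'$ with $\mathcal C_2$ for $k=3$, $(\delta,\tau,\pi)=(0,0,0)$ up to column permutation, apply Theorem~\ref{thm} to handle part~(1), and then for part~(2) check that pairwise column independence is equivalent to all $\alpha_i\neq 0$ before concluding that, under this hypothesis, the only remaining obstruction to the MDS property is condition~(1) of Theorem~\ref{thm}. The paper exhibits the singular $3\times 3$ minor explicitly rather than citing the failure of condition~(1), but this is the same computation.
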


\begin{proof} Note that the parameters of $\C=GR{S_3}({\boldsymbol{ \alpha}},{\bf 1})$ and its dual are $[n,3,n-2]$ and $[n,n-3,4]$, respectively. The generator matrix of $\C'$ in Definition \ref{defi33} is given as follows:
\begin{equation}\label{eq:G} G=
\left( \begin{array}{ccccccc}
1 & 1& \ldots &  1&1&0&0 \\
\alpha_1& \alpha_2& \ldots &  \alpha_{n}&0&1&0 \\
\alpha_1^{2}& \alpha_2^{2}& \ldots &  \alpha_{n}^{2} &0&0&1 \\
\end{array} \right).
\end{equation}
 Note that in Equation \eqref{eq:1.2} if  $k=3$ and $\delta=\tau=\pi=0$, then the induced $G_2$ could 
 also be a generator matrix of $\C'$  in the sense of permutation equivalence of linear codes.

By Theorem \ref{thm},  the code $\C'$ is MDS if and only if the following conditions hold:
\begin{enumerate}
\item  The set $\{\alpha_1, \ldots, \alpha_n\}$ is an $(n,2,0)$-set in $\gf(q)$.

\item  The set $\{\alpha_1, \ldots, \alpha_n\}$ is an $(n,1,0)$-set in $\gf(q)$.

\item  For any $1\le i\neq j\le n$, we have $\alpha_i\alpha_j \neq 0.$ 

\item For any $1\le i\le n$, we have $\alpha_i^2\neq 0.$
\end{enumerate}
Namely  $\C'$ is MDS with parameters $[n+3,3,n+1]$ if and only if $\alpha_i\neq 0$ for all $1\le i\le n$ and $\alpha_i+\alpha_j\neq 0$ for $1\le i<j\le n$. This proves (1).

Note that $d(\C^{\bot})=4$. The code $\C=GR{S_3}({\boldsymbol{ \alpha}},{\bf 1})$ is  an almost optimally extendable code if and only if  $d(\C'^{\bot} ) = 3$. In Equation \eqref{eq:G}, $G$ is a parity matrix of $\C'^{\bot} $. If $\alpha_i= 0$ for some $1\le i\le n$, then $d(\C'^{\bot} ) = 2$.
Next let $\alpha_i\neq 0$ for all $1\le i\le n$.
Then $d(\C'^{\bot} ) = 3$ if and only if there exists a singular submatrix $D$ of order 3 in $G$. Suppose that $\alpha_i+\alpha_j=0$ for some $1\le i<j\le n$. Then \begin{equation*} 
\left| \begin{array}{ccccccc}
1 & 1& 0 \\
\alpha_i& \alpha_j&1 \\
\alpha_i^{2}& \alpha_j^{2}&0 \\
\end{array} \right|=-(\alpha_j-\alpha_i)(\alpha_j+\alpha_i)=0.
\end{equation*}
This completes the proof.
\end{proof}

\section{Summary and concluding remarks}

The main contributions of this paper are summarized as follows:

\begin{itemize}

\item Theorem  \ref{ext} was established.  It says that the code $\mathcal{C}_2$ is exactly an extended code 
of this type $\overline{\C}(\bu)$, where $\C$ is a Roth-Lempel code. This result establishes a direct relationship between $\mathcal{C}_2$ and the corresponding Roth-Lempel code, providing valuable insights into their connection. %This is the main result of this paper.

\item  Some sufficient and necessary conditions for $\mathcal{C}_2$ being  an MDS code of non-Reed-Solomon  type were presented in Theorem \ref{thm}.%By establishing this conditions, the paper offers a deeper understanding of the MDS properties of $\mathcal{C}_2$ and highlights its distinction from Reed-Solomon codes.%We gave  a sufficient and necessary condition  of the code $\MC_2$ to be MDS of non-Reed-Solomon type, see Theorem \ref{thm}.

%\item provide 

\item Some sufficient and necessary conditions for both $\mathcal{C}_2$ and its dual being almost MDS 
were derived and presented in Theorems \ref{thm: amds} and \ref{thm: amds2}. These results contribute to the study of the almost MDS property and shed light on the error-correcting capabilities of $\mathcal{C}_2$ and its dual.%We gave some sufficient and necessary conditions of $\MC_2$ and its dual to be Almost MDS, see Theorems \ref{thm: amds} and \ref{thm: amds2}.

\item  Some sufficient and necessary conditions for $\mathcal{C}_2$ being  near MDS were obtained and 
presented in Corollary \ref{cor:nmds}. %This result explores the near MDS property of $\mathcal{C}_2$ %and reveals its potential in error correction.%Under some condition, we found a sufficient and necessary condition  of the code $\MC_2$ to be near MDS, see Corollary \ref{cor:nmds}.

\item The covering radii and deep holes of the dual codes of some Roth-Lempel codes were determined in Theorem \ref{thm:cov}. 

\item An infinite family of (almost) optimally extendable codes with dimension three was given in Theorem \ref{thm:extcode}. %These results provide a comprehensive understanding of the error-correcting capabilities and extension properties of Roth-Lempel codes.%Determinations of covering radii and deep holes of dual of some Roth-Lempel codes and an infinite family of (almost)  optimally extendable codes with dimension three, see Theorems \ref{thm:cov} and \ref{thm:extcode}.
\end{itemize}

%It is nice if more extended codes of Roth-Lempel codes are well studied. 

The study of extended codes of linear codes over finite fields is an interesting and hard problem, and further exploration in this direction would be  valuable.

%MDS codes are one of important classes of linear codes in coding theory due to their wide applications in communications and other fields.
%In this paper, we presented a new class of MDS codes of non-Reed-Solomon type, which is exactly the  extended codes of Roth-Lempel codes by adding one digit.  We studied when the codes are MDS, AMDS, and NMDS, respectively.   As applications of our main result, we compute the covering radii of dual of some Roth-Lempel codes and  also obtained an infinite family of (almost)  optimally extendable codes with dimension three. 

\end{document}